\newtheorem{lemma}{Lemma}
\newtheorem{proposition}{Proposition}
\newtheorem{definition}{Definition}
\newtheorem{Rem}{Remark}
\newcounter{problem}
\newcounter{subproblem}[problem]
\definecolor{purple}{RGB}{139, 0, 139}
\newif\iftodo   
\newif\iftodoshort  
\DeclareMathOperator*\dom{dom}		
\DeclareMathOperator*\maximize{maximize}		
\DeclareMathOperator*\subject{subject \ to}		
\newcommand{\Rmnum}[1]{\uppercase\expandafter{\romannumeral #1}}
\newcommand{\rmnum}[1]{\lowercase\expandafter{\romannumeral #1}}
\newcommand{\diag}{\mathop{\mathrm{diag}}}
\newcommand{\field}[1]{\mathbb{#1}}
\newcommand{\emenge}[1]{\mathscr{#1}}
\newcommand{\set}[1]{\mathscr{#1}}
\newcommand{\operator}[1]{\mathrm{#1}}
\newcommand{\R}{{\field{R}}}
\newcommand{\RN}{{\field{R}}_{+}}
\newcommand{\RP}{{\field{R}}_{++}}
\newcommand{\Ns}{{\emenge{N}}}
\newcommand{\Ks}{{\emenge{K}}}
\newcommand{\sinr}{\operator{SINR}}
\newcommand{\ma}[1]{\boldsymbol{\mathbf{#1}}}
\newcommand{\ve}[1]{\boldsymbol{\mathbf{#1}}}
\newcommand{\vx}{\ve{x}}
\newcommand{\vw}{\ve{w}}
\newcommand{\vp}{\ve{p}}
\newcommand{\vy}{\ve{y}}
\newcommand{\vf}{\ve{f}}
\newcommand{\mX}{\ma{X}}
\newcommand{\mG}{\ma{G}}
\newcommand{\mA}{\ma{A}}
\newcommand{\mV}{\ma{V}}
\newcommand{\mVt}{\tilde{\ma{V}}}
\newcommand{\mD}{\ma{D}}
\newcommand{\mC}{\ma{C}}
\newcommand{\ul}{(\text{u})}
\newcommand{\dl}{(\text{d})}
\newcommand{\tmax}{\text{max}}
\newcommand{\trans}{\text{(trans)}}
\newcommand{\cosl}[1]{}
\newcommand{\resl}[1]{}
\newcommand{\fnql}[1]{}
\newcommand{\fnsv}[1]{}
\begin{document}
%
\title{Improving Resource Efficiency with Partial Resource Muting for Future Wireless Networks}
\author{
\IEEEauthorblockN{Qi Liao\IEEEauthorrefmark{1} and R. L. G. Cavalcante\IEEEauthorrefmark{2}}
\IEEEauthorblockA{\IEEEauthorrefmark{1}Nokia Bell Labs, Stuttgart, Germany\\ 
\IEEEauthorrefmark{2}Fraunhofer Heinrich Hertz Institute and Technical University of Berlin, Germany \\
Email: qi.liao@nokia-bell-labs.com; renato.cavalcante@hhi.fraunhofer.de}
}
\maketitle
\begin{abstract} 
We propose novel resource allocation algorithms that have the objective of finding a good tradeoff between resource reuse and interference avoidance in wireless networks. To this end, we first study properties of functions that relate the resource budget available to network elements to the optimal utility and to the optimal resource efficiency obtained by solving max-min utility optimization problems. From the asymptotic behavior of these functions, we obtain a transition point that indicates whether a network is  operating in an efficient noise-limited regime or in an inefficient interference-limited regime for a given resource budget. For networks operating in the inefficient regime, we propose a novel partial resource muting scheme to improve the efficiency of the resource utilization. The framework is very general. It can be applied not only to the downlink of 4G networks, but also to  5G networks equipped with flexible duplex mechanisms. Numerical results show significant performance gains of the proposed scheme compared to the solution to the max-min utility optimization problem with full frequency reuse.
\end{abstract}
\section{Introduction}\label{sec:intro}
Future network architectures are envisioned to provide seamless connections anywhere and anytime with asymmetric traffic. In particular, in the \ac{MAC}/link layer, one of the key challenges to bring this vision to reality is to devise service-centric resource allocation mechanisms able to find a good balance between interference avoidance and resource (spectrum) reuse. Promising approaches to address this challenge are the muting schemes, which include the \ac{ABS} approach in time domain \cite{3GPP36133} and the mutually exclusive resource block allocation approach in frequency domain \cite{hegde2016optimal}. However, in these approaches two questions remain largely unanswered:
\begin{itemize}
\item At which point full resource reuse becomes inefficient? In other words, when is it appropriate to activate resource muting?  
\item How to develop efficient resource muting schemes that deal with complex interference patterns caused by disruptive architectures such as flexible duplex in future networks? 
\end{itemize} 
In this paper, we address these questions by studying properties of solutions to a large class of max-min utility fairness problems. In more detail, building upon the seminal work of Yates \cite{yates1995framework}, which has introduced the concept of {\it \acp{SIF}} in wireless networks (see Definition \ref{def:SIF}), many researchers have devised efficient algorithms able to solve a large class of max-min fairness problems with the \ac{SINR} \cite{yates1995integrated,vucic2011fixed,zheng2016wireless,nuzman2007contraction} as the utility function. More recently, fairness problems with rate-related utilities characterized by nonlinear load coupling models have also been increasingly gaining attention \cite{ho2015power,cavalcante2017max,zheng2016wireless}. Most studies devoted to these problems focus heavily on the development of numerical solvers, and, by doing so, issues such as properties of the utility and the network efficiency as a function of the budget available to network elements have remained largely unexplored.  This gap has been addressed in the recent study in \cite{cavalcante2017performance}, which has used the concept of asymptotic functions in nonlinear analysis \cite{auslender2006asymptotic} to derive tight bounds for the network performance obtained by solving max-min utility problems. That study has also derived a transition point that, for a given resource budget, indicates whether a network operates in a resource efficient region.

\par In this study we build upon the findings in \cite{cavalcante2017performance} to develop novel efficient resource allocation algorithms for current and future network technologies.  Our main contributions can be summarized as follows:
\begin{itemize}
\item  We develop partial resource muting schemes based on the characterization of  resource-efficient regions of solutions to a general class of resource allocation problems.
\item The above schemes, which are based on the solutions to a series of subproblems using fixed point iterations, efficiently detect a set of bottleneck users that should be assigned to the muting region, and the schemes optimize the service-centric resource allocation. 
\item The framework proposed here can be applied both to the \ac{DL} of 4G  networks and to 5G networks equipped with flexible duplex mechanisms. 
\end{itemize}
This study is organized as follows. In Section \ref{sec:def} we introduce notation and mathematical background on \rmnum{1})  max-min fairness problems and {\it \acp{CEVP}}, and \rmnum{2}) properties of the solutions to \acp{CEVP}. In Section \ref{sec:DL} we propose a resource muting scheme for \ac{DL} resource allocation. This scheme is enhanced in Section \ref{sec:Extension_5G} to cater for complex interference models in 5G networks. Conclusions are summarized in Section \ref{sec:concl}.
\section{Notations, Definitions, and Preliminaries}\label{sec:def}
The following definitions are used in this paper. The nonnegative and positive orthant in $K$ dimensions are denoted by $\RN^K$ and $\RP^K$, respectively. Let $\vx\leq\vy$ denote the component-wise inequality between two vectors. A norm $\|\cdot\|$ on $\R_+^K$ is said to be monotone if $(\forall \vx\in\R_+^K)(\forall \vy\in \R_+^K) \ \ve{0}\leq \vx\leq \vy \Rightarrow \|\vx\|\leq \|\vy\|$. By $\diag(\vx)$ we denote a diagonal matrix with the elements of $\vx$ on the main diagonal.
The cardinality of set $\set{A}$ is denoted by $|\set{A}|$. The positive part of a real function is defined by $\left[f(x)\right]^+:=\max\left\{0, f(x)\right\}$. Standard interference functions (SIFs) are defined as follows:

\begin{definition}{\cite{yates1995framework}}
	\label{def:SIF}
	A function $f : \R^K\to \RP\cup\{\infty\}$ is an \ac{SIF} if the following axioms hold:
	1) (Monotonicity) $\left(\forall \vx\in\RN^K\right)\left(\forall \vy\in\RN^K\right)$ \ $\vx \leq \vy \Rightarrow f(\vx)\leq f(\vy)$; 
	2) (Scalability) $\left(\forall \vx\in\RN^K\right)\left(\forall \alpha>1\right)$ \ $\alpha f(\vx)> f(\alpha\vx)$;  
	and 3) (Nonnegative effective domain) $\dom f := \{\vx\in\mathbb{R}^K~|~f(\vx)<\infty\}=\RN^K$.
	A vector function $\vf:\RN^K\to\RP^K:\vx\mapsto[f_1(\vx), \ldots, f_N(\vx)]$ is called an \ac{SIF} if each of the component functions is an \ac{SIF}.
\end{definition}


We now turn our attention to the general description of the problems we address in this study. In more detail, a large array of utility maximization problems in wireless networks can be seen as particular instances of the following optimization problem \cite{yates1995integrated,vucic2011fixed,zheng2016wireless,ho2015power}: 
\vspace{-1.5ex}
\begin{subequations}\label{eqn:Maxmin}
\begin{align}
\maximize_{\vx\in\RN^K} \ & \min_{k\in\Ks} \  u_k(\vx) \label{eqn:Maxmin_1}\\
\subject \ & \|\vx\|\leq \theta \label{eqn:Maxmin_2}
\vspace{-1ex}
\end{align}
\end{subequations}
where $\Ks:=\{1, \ldots, K\}$ is the set of network elements, $u_k:\RN^K\to\RN$ is the utility function of network element $k\in\Ks$,  and $\|\cdot\|$ is a monotone norm used to constrain the resource utilization $\vx=[x_1,\ldots,x_K]$ to a given budget $\theta> 0$. In the above problem formulation, the $k$th component $x_k$ of the optimization variable $\vx$ is the resource utilization of network element $k\in\Ks$.  Now, consider the following conditional eigenvalue problem (CEVP):
%

{({\bf The conditional eigenvalue problem}:) \it Given a monotone norm $\|\cdot\|$, a budget $\theta\in\RP$, and a mapping $\ve{T}:\RN^K\to\RP^K:\vx\mapsto\left[T^{(1)}(\vx), \ldots, T^{(K)}(\vx)\right]$, where $T^{(k)}:\RN^K\to\RP$ is an SIF for each $k\in\Ks$, the CEVP is stated as follows:  
	\begin{align}
	\label{eqn:cevp}
	\text{Find } & (\vx,c)\in\RN^K\times\RP\nonumber\\
	\text{such that }& \ve{T}(\vx) = \dfrac{1}{c} \vx \text{ and }\|\vx\|=\theta.
	\end{align}}
As an implication of the results in \cite{nuzman2007contraction}, if the utility functions in \eqref{eqn:Maxmin} and the SIF $\ve{T}$ in \eqref{eqn:cevp} are related by $(\forall k\in\Ks)(\forall \vx\in\R^K)~u_k(\vx)=x_k/T^{(k)}(\vx)$, then $\vx^\star$ solves \eqref{eqn:Maxmin} if $(\vx^\star,c^\star)$ solves \eqref{eqn:cevp}. Furthermore, we have $(\forall k\in \Ks)~u_k(\vx^\star)=c^\star$. Therefore, to solve \eqref{eqn:Maxmin}, we only need to devise efficient algorithmic solutions to \eqref{eqn:cevp}. To this end, \cite{nuzman2007contraction} has proved that \eqref{eqn:cevp} has a unique solution $(\vx^\star,c^\star)\in\RP^K\times\RP$ and that $\vx^\star\in\RP^K$ is the limit of the sequence $\left(\vx^{(n)}\right)_{n\in\mathbb{N}}$ generated by
\begin{equation}
\vx^{(n+1)} = \frac{\theta}{\left\|\ve{T}\left(\vx^{(n)}\right)\right\|}\ve{T}\left(\vx^{(n)}\right), \mbox{ with } \vx^{(0)}\in\RN^K.
\label{eqn:FP_w}
\vspace{-.5ex}
\end{equation}
With knowledge of $\vx^\star$, we recover $c^\star$ from the equality $c^\star=\theta/\|\ve{T}(\vx^\star)\|$.

Note that later studies have also established the connection between Problem \eqref{eqn:Maxmin} and Problem \eqref{eqn:cevp} by using arguments with different levels of generality \cite{zheng2016wireless}. Recently, \cite{cavalcante2017performance} has studied the influence of the budget $\theta$ on the solution to these problems, and we summarize some of the results of that study because they are crucial to the contributions that follow. 

One of the key tools used in the analysis in \cite{cavalcante2017performance} is the notion of asymptotic functions associated with \acp{SIF}: 
%
\vspace{-1ex}
\begin{proposition}{\cite[Prop. 1]{cavalcante2017performance}}
In $\RN^{K}$, the asymptotic function $T_{\infty}: \R^K\to \R\cup\{\infty\}$ associated with an \ac{SIF} $T:\R^K\to \RP\cup\{\infty\}$ is given by
	\vspace{-1ex}
  \begin{equation}
	\left(\forall\vx\in \RN^K\right) T_{\infty} (\vx) = \lim_{h\to \infty} T(h\vx)/h \in\RN.
	\label{eqn:asympSIF}
	\end{equation}
\label{prop:asymp} 
\vspace{-4ex}
\end{proposition}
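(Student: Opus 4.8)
The plan is to derive~\eqref{eqn:asympSIF} from the scalability and monotonicity axioms of Definition~\ref{def:SIF}, thereby reducing the asymptotic function of nonlinear analysis~\cite{auslender2006asymptotic} -- which a priori is defined through a joint limit in the scaling parameter and the direction -- to the one-dimensional ray limit appearing on the right-hand side of~\eqref{eqn:asympSIF}.

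First I would fix $\vx\in\RN^K$ and examine the scalar map $g_{\vx}:\RP\to\RP:h\mapsto T(h\vx)/h$. For $0<s<t$, writing $t=(t/s)\,s$ with $t/s>1$ and invoking scalability gives $T(t\vx)=T\bigl((t/s)(s\vx)\bigr)<(t/s)\,T(s\vx)$, i.e.\ $g_{\vx}(t)<g_{\vx}(s)$; hence $g_{\vx}$ is nonincreasing. Being also bounded below (by $0$), it converges, so $L(\vx):=\lim_{h\to\infty}T(h\vx)/h$ exists, and the monotonicity together with $\dom T=\RN^K$ gives $0\le L(\vx)\le g_{\vx}(1)=T(\vx)<\infty$. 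This already shows that the right-hand side of~\eqref{eqn:asympSIF} is a well-defined element of $\RN$, settling the ``$\in\RN$'' part of the claim.

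It then remains to identify $L(\vx)$ with the asymptotic function $T_\infty(\vx)$, which on $\R^K$ (extending $T$ by $+\infty$ outside $\RN^K$, consistently with $\dom T=\RN^K$) is characterized by $T_\infty(\vx)=\liminf_{h\to\infty,\,\vx'\to\vx}T(h\vx')/h$~\cite{auslender2006asymptotic}. The bound $T_\infty(\vx)\le L(\vx)$ is immediate from the constant perturbation $\vx'\equiv\vx$. For the reverse bound, fix $\epsilon\in(0,1)$ and take arbitrary sequences $h_k\to\infty$ and $\vx_k\to\vx$. Along any subsequence on which $\vx_k$ has a negative component the term $T(h_k\vx_k)/h_k$ equals $+\infty$ and does not affect the $\liminf$; along the complementary subsequence $\vx_k\in\RN^K$, and $\vx_k\to\vx$ forces $\vx_k\ge(1-\epsilon)\vx$ componentwise for large $k$ (for a zero component of $\vx$ this reads $\vx_{k,i}\ge0$). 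Monotonicity then yields $T(h_k\vx_k)\ge T\bigl((1-\epsilon)h_k\vx\bigr)$, so $T(h_k\vx_k)/h_k\ge(1-\epsilon)\,g_{\vx}\bigl((1-\epsilon)h_k\bigr)\to(1-\epsilon)L(\vx)$. Taking the $\liminf$ over $k$ and then letting $\epsilon\downarrow0$ gives $T_\infty(\vx)\ge L(\vx)$, which completes the identification.

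The main obstacle is precisely this reverse inequality. Since SIFs need not be convex, one cannot simply invoke the convex-analysis identity equating the asymptotic function with the ray limit; the pinching argument above instead leans entirely on monotonicity, using a rescaled copy of $\vx$ as a lower bound for the perturbed directions $\vx_k$. Extra care is required for directions $\vx$ on the boundary of $\RN^K$, where nearby $\vx_k$ may leave $\RN^K$ altogether; this is exactly what the $+\infty$ convention off $\dom T$ and the subsequence split are there to handle.
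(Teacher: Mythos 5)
The paper states Proposition \ref{prop:asymp} only as a cited result from \cite{cavalcante2017performance} and gives no proof of its own, so there is nothing in-text to compare against; your argument is correct and follows the standard route used in that reference: scalability makes $h\mapsto T(h\vx)/h$ nonincreasing and hence convergent (finite because $\dom T=\RN^K$), and monotonicity pinches perturbed directions $\vx_k\to\vx$ against $(1-\epsilon)\vx$ to collapse the Auslender--Teboulle $\liminf$ to the ray limit. Your handling of boundary directions of $\RN^K$ and of points outside $\RN^K$ via the $+\infty$ convention is also sound.
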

\vspace{-.5ex}
 
For convenience, before we show relations between properties of asymptotic functions and properties of the solution to Problem \eqref{eqn:cevp}, let us the recall the following definitions:


%

%
\vspace{-.5ex}
\begin{definition}[Utility and resource efficiency]\cite[Def. 4]{cavalcante2017performance}
Let $(\vx_{\theta}, c_{\theta})\in\RP^K\times\RP$ denote the solution to Problem \eqref{eqn:cevp} for a given budget $\theta\in\RP$. The utility and the resource efficiency function are defined by, respectively, $U:\RP\to \RP:\theta\mapsto c_{\theta}$ and $E:\RP\to\RP:\theta\mapsto U(\theta)/\|\vx_{\theta}\|$.
\label{def:UtilityEfficiency}
\end{definition}
We can now state selected properties of the solution to Problem~\eqref{eqn:cevp} (and hence to Problem~\eqref{eqn:Maxmin}):
\begin{proposition}{\cite[Prop. 3]{cavalcante2017performance}}
Assume that the following \ac{CEVP}:
Find  $(\vx_{\infty}, \lambda_{\infty})\in\RN^K\times\RN$ such that 
	\vspace{-.5ex}
\begin{equation}\label{eqn:AsymFPI}
\ve{T}_{\infty}(\vx_{\infty})  = \lambda_{\infty}\vx_{\infty}, \
\|\vx_{\infty}\|  = 1
	\vspace{-.5ex}
\end{equation}
has a unique positive solution $(\vx_{\infty}, \lambda_{\infty})\in\RP^K\times\RP$, where
each coordinate $T^{(k)}_{\infty}$  of $\ve{T}_\infty:\RN^K\to\RN^K$ is an asymptotic function associated with the \ac{SIF} $T^{(k)}$ in \eqref{eqn:cevp}. 
Then the solution to \eqref{eqn:cevp} has the following properties:
\begin{itemize}
\item[{\normalfont (i)}] Asymptotic utility and resource efficiency:
\begin{align}
\vspace{-.5ex}
 \sup_{\theta>0} U(\theta) & = \lim_{\theta\to\infty} U(\theta) = 1/\lambda_{\infty},\label{eqn:asympU}\\
 \sup_{\theta>0} E(\theta) & = \lim_{\theta\to 0^{+}} E(\theta) = 1/\|\ve{T}(\ve{0})\|.\label{eqn:asympE}
\vspace{-.5ex}
\end{align}
\item[{\normalfont (ii)}] Upper bound for the utility: $(\forall \theta\in\RP)$,
  \begin{equation}
 U(\theta)\leq 
\begin{cases}
\vspace{-.5ex}
\theta/\|\ve{T}(\ve{0})\|, \mbox{  if } \theta\leq \theta^{\trans}\\
1/\lambda_{\infty}, \mbox{ otherwise},
\vspace{-.5ex}
\end{cases}	
\label{eqn:utilityUB}
	\end{equation}
	where $\theta^{\trans}$ is the \emph{transition point} defined by 
	\begin{equation}
	\vspace{-.5ex}
	\theta^{\trans} := \|\ve{T}(\ve{0})\|/\lambda_{\infty}.
	\label{eqn:TransPoint}
	\vspace{-.5ex}
	\end{equation}
\item[{\normalfont (iii)}] Upper bound for the resource efficiency: $(\forall \theta\in\RP)$,
\begin{equation}
E(\theta)\leq \min\left\{1/\|\ve{T}(\ve{0})\|, 1/(\lambda_{\infty}\theta) \right\}.
\label{eqn:energyeffUB}
\end{equation}
\end{itemize}
\label{prop:PerformLimits}
\end{proposition}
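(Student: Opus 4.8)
\emph{Proof strategy.} The plan is to derive all three items from two elementary ingredients: a Collatz--Wielandt-type bound for the positively homogeneous map $\ve{T}_\infty$, and monotonicity of $\|\cdot\|$ together with continuity of SIFs. As preliminaries I would record the following. For each $k\in\Ks$ and $\vx\in\RN^K$, scalability makes $h\mapsto T^{(k)}(h\vx)/h$ strictly decreasing on $(0,\infty)$, so by Proposition~\ref{prop:asymp} we have $T^{(k)}_\infty(\vx)=\inf_{h>0}T^{(k)}(h\vx)/h\le T^{(k)}(2\vx)/2<T^{(k)}(\vx)$ for every $\vx\in\RN^K$, the last inequality being scalability with $\alpha=2$ (which at $\vx=\ve{0}$ still reads $T^{(k)}(\ve{0})/2<T^{(k)}(\ve{0})$ since $\ve{T}(\ve{0})\in\RP^K$). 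Passing to the limit in the monotonicity and scalability axioms shows $\ve{T}_\infty$ is monotone and positively homogeneous of degree one. Finally, since the solution $(\vx_\theta,c_\theta)\in\RP^K\times\RP$ of~\eqref{eqn:cevp} satisfies $\vx_\theta=c_\theta\,\ve{T}(\vx_\theta)$ and $\|\vx_\theta\|=\theta$, absolute homogeneity of $\|\cdot\|$ gives $\theta=c_\theta\|\ve{T}(\vx_\theta)\|$, hence by Definition~\ref{def:UtilityEfficiency}
\[
E(\theta)=\frac{U(\theta)}{\|\vx_\theta\|}=\frac{c_\theta}{\theta}=\frac{1}{\|\ve{T}(\vx_\theta)\|}.
\]
I also use that SIFs are continuous on $\RN^K$.

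For the upper bounds (ii) and (iii) I would first prove the key lemma: \emph{if $\vx\in\RP^K$ satisfies $\vx>c\,\ve{T}_\infty(\vx)$ for some $c>0$, then $c<1/\lambda_\infty$}. The argument is the classical one: set $t^\star:=\max_{k\in\Ks}(\vx_\infty)_k/x_k>0$, so $t^\star\vx\ge\vx_\infty$ with equality in some coordinate $j$; then, using homogeneity, monotonicity, and $\ve{T}_\infty(\vx_\infty)=\lambda_\infty\vx_\infty$,
\[
t^\star\vx>c\,\ve{T}_\infty(t^\star\vx)\ge c\,\ve{T}_\infty(\vx_\infty)=c\lambda_\infty\vx_\infty,
\]
and reading off coordinate $j$, where $(\vx_\infty)_j>0$, gives $(\vx_\infty)_j>c\lambda_\infty(\vx_\infty)_j$, i.e.\ $c<1/\lambda_\infty$. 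Applying this with $\vx=\vx_\theta$ and $c=c_\theta$ — valid because $\vx_\theta=c_\theta\ve{T}(\vx_\theta)>c_\theta\ve{T}_\infty(\vx_\theta)$ by the strict inequality above — yields $U(\theta)=c_\theta<1/\lambda_\infty$. Separately, $\vx_\theta\ge\ve{0}$ and monotonicity of $\ve{T}$ and of $\|\cdot\|$ give $\|\ve{T}(\vx_\theta)\|\ge\|\ve{T}(\ve{0})\|>0$, so $E(\theta)=1/\|\ve{T}(\vx_\theta)\|\le 1/\|\ve{T}(\ve{0})\|$, equivalently $U(\theta)=\theta E(\theta)\le\theta/\|\ve{T}(\ve{0})\|$. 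Hence $U(\theta)\le\min\{\theta/\|\ve{T}(\ve{0})\|,\,1/\lambda_\infty\}$; comparing the two terms at $\theta^{\trans}=\|\ve{T}(\ve{0})\|/\lambda_\infty$ yields~\eqref{eqn:utilityUB}, and dividing through by $\theta$ yields~\eqref{eqn:energyeffUB}.

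It remains to identify the suprema and limits in (i). Since $E(\theta)\le 1/\|\ve{T}(\ve{0})\|$ for all $\theta$, it suffices to show $\lim_{\theta\to 0^+}E(\theta)=1/\|\ve{T}(\ve{0})\|$: as $\theta\to 0^+$ we have $\|\vx_\theta\|=\theta\to 0$, hence $\vx_\theta\to\ve{0}$, so by continuity $E(\theta)=1/\|\ve{T}(\vx_\theta)\|\to 1/\|\ve{T}(\ve{0})\|$, which is therefore also $\sup_{\theta>0}E(\theta)$. For the utility I would pass through the reformulation~\eqref{eqn:Maxmin}: since $\|\theta\vx_\infty\|=\theta$, the point $\theta\vx_\infty$ is feasible for budget $\theta$, so $U(\theta)\ge\min_{k\in\Ks}u_k(\theta\vx_\infty)$, and for each $k$
\[
u_k(\theta\vx_\infty)=\frac{(\vx_\infty)_k}{T^{(k)}(\theta\vx_\infty)/\theta}\xrightarrow[\theta\to\infty]{}\frac{(\vx_\infty)_k}{T^{(k)}_\infty(\vx_\infty)}=\frac{1}{\lambda_\infty}
\]
by Proposition~\ref{prop:asymp} and~\eqref{eqn:AsymFPI}. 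Hence $\min_{k\in\Ks}u_k(\theta\vx_\infty)\to 1/\lambda_\infty$, and squeezing $U(\theta)$ between this quantity and the bound $1/\lambda_\infty$ gives $\lim_{\theta\to\infty}U(\theta)=1/\lambda_\infty=\sup_{\theta>0}U(\theta)$.

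The main obstacle is the Collatz--Wielandt lemma: one must get the inequality directions exactly right (the strictness of $\vx_\theta>c_\theta\ve{T}_\infty(\vx_\theta)$ rests on strict scalability, via $T^{(k)}_\infty<T^{(k)}$), and one must observe that the coordinate $j$ realizing $t^\star$ has $(\vx_\infty)_j>0$, which is precisely where positivity of the asymptotic eigenvector is used. A secondary technical point is justifying continuity of the SIFs at the boundary point $\ve{0}$ in the $\theta\to 0^+$ limit; this is a standard property of SIFs and should be cited explicitly.
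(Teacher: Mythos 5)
The paper does not actually prove this proposition---it is imported verbatim from \cite[Prop.~3]{cavalcante2017performance}---so there is no in-paper proof to compare against; I therefore assess your argument on its own. The core of it is sound and matches the spirit of the cited source: strict scalability gives $T^{(k)}_\infty(\vx)\le T^{(k)}(2\vx)/2<T^{(k)}(\vx)$, the Collatz--Wielandt-type lemma for the monotone, positively homogeneous map $\ve{T}_\infty$ is stated with the inequality directions and the use of $(\vx_\infty)_j>0$ handled correctly, and combining $U(\theta)=c_\theta<1/\lambda_\infty$ with $E(\theta)=1/\|\ve{T}(\vx_\theta)\|\le 1/\|\ve{T}(\ve{0})\|$ yields (ii) and (iii) exactly. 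The $\theta\to\infty$ limit in (i) via feasibility of $\theta\vx_\infty$ and a squeeze is also correct (note $T^{(k)}_\infty(\vx_\infty)=\lambda_\infty(\vx_\infty)_k>0$, so no division by zero). A pleasant feature of your route is that you never need the monotonicity of $\theta\mapsto U(\theta)$ and $\theta\mapsto E(\theta)$: the uniform upper bounds plus the limits along specific sequences already identify the suprema.

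The one genuine gap is your final remark that continuity of $\ve{T}$ at $\ve{0}$ is ``a standard property of SIFs'' to be cited. It is \emph{not} a consequence of the three axioms in Definition~\ref{def:SIF}. Take $K=1$, $\|\cdot\|=|\cdot|$, $T(0)=1$ and $T(x)=2+x$ for $x>0$: this is monotone and scalable with $\dom T=\RN$, and $T_\infty(x)=x$, so the hypothesis on \eqref{eqn:AsymFPI} holds with $\lambda_\infty=1$; yet the CEVP gives $c_\theta=\theta/(2+\theta)$, hence $\sup_{\theta>0}E(\theta)=\lim_{\theta\to 0^+}E(\theta)=1/2\neq 1=1/\|T(0)\|$, and \eqref{eqn:asympE} fails. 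The sandwich argument ($\vy\le\alpha\vx$, $\vx\le\alpha\vy$ with $\alpha\downarrow 1$) shows SIFs are continuous on the interior $\RP^K$, but upper semicontinuity at the boundary point $\ve{0}$---which is exactly what the step $\|\ve{T}(\vx_\theta)\|\to\|\ve{T}(\ve{0})\|$ uses---is an additional hypothesis. This is an omission inherited from the statement as reproduced here rather than a defect of your argument, and it is harmless for this paper because the mapping of Lemma~\ref{lem:SIF_T_w} is continuous on all of $\RN^K$; but you should state continuity at $\ve{0}$ as an explicit assumption (or verify it for the mapping at hand) rather than attribute it to the SIF axioms.
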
 
The transition point $\theta^{\trans}$ in \eqref{eqn:TransPoint} serves as a coarse indicator of whether we can obtain substantial gains in utility by increasing the budget $\theta$. More precisely, if the given budget $\theta$ is greater than $\theta^{\trans}$, then the network is likely operating in a regime where the performance is limited by interference, so increasing $\theta$ even by orders of magnitude typically brings only marginal gains in utility. In contrast, if $\theta < \theta^{\trans}$, then noticeable gains in utility can be obtained by increasing the budget $\theta$. Fig. \ref{fig:PowerUtility_Bound} illustrates this observation, which is heavily exploited by the algorithms proposed in the next sections.

%
 \begin{figure}[t]
    \centering
		    \includegraphics[width=1\columnwidth]{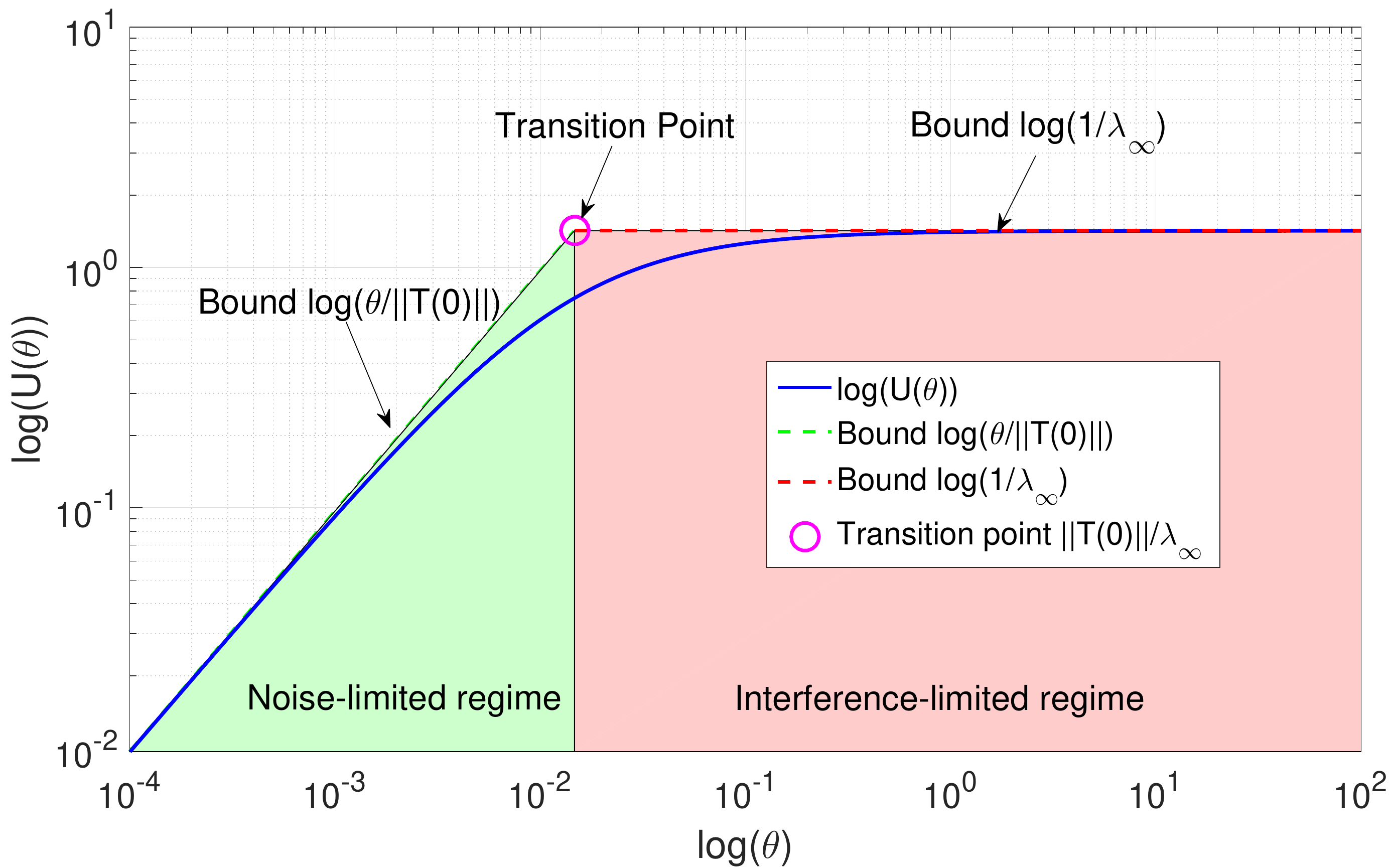}
				\vspace{-1ex}
        \caption{Example: utility as a function of the power budget $\theta$.}
				\vspace{-3ex}
 \label{fig:PowerUtility_Bound}
\end{figure}
\vspace{-1ex}
\section{Proposed algorithms for DL resource allocation}
\label{sec:DL}
%
In this section, we build upon the results in \cite{nuzman2007contraction} and Proposition \ref{prop:PerformLimits} to devise novel algorithms for DL resource allocation. Then, in Section~\ref{sec:Extension_5G}, we study problems with more complex interference models incorporating disruptive architectures envisioned for 5G networks.
\vspace{-.5ex}
\subsection{System Model and Problem Formulation}\label{subsec:DL_SysModel}
We consider an \ac{OFDM}-based wireless network system, consisting of a set of \acp{BS} $\Ns:=\{1, \ldots, N\}$ and a set of communication links $\Ks:=\{1, \ldots, K\}$ in \ac{DL}. Hereafter, we use the terms \lq\lq communication links\rq\rq, \lq\lq services\rq\rq,  and \lq\lq users\rq\rq \ interchangeably (without loss of generality, we can assume that each user sets up one communication link for one service at an unit of time). Let $\overline{W}$ denote the total bandwidth in Hz, and let $\vw\in[0,1]^K$ be a vector collecting the fraction of bandwidth allocated to the services.  We assume that the transmit power spectral density (in Watts/Hz) of all services are given and collected in a vector $\vp\in\RP^K$. Let the data rate demand of all services (in bit/s) be collected in $\bar{\ve{r}}\in\RP^K$.
The matrix $\mA\in\{0,1\}^{N\times K}$ denotes the \ac{BS}-to-service assignment matrix, where $a_{n,k}=1$ if $k$ is served by \ac{BS} $n$, and $0$ otherwise.


By $v_{k,l}$ we denote the channel gain between the transmitter of service $l$ and the receiver of service $k$. For $k\neq l$, $v_{k,l}$ is the interference channel gain, which is positive if service $l$ causes interference to service $k$, otherwise $v_{k,l} = 0$. The scalar $v_{k,k}>0$ is the channel gain of link $k$. By $\sigma_k^2$ we denote the noise spectral density (in Watt/Hz) in the receiver of service $k$. 
By interpreting $\vw$ as the probability of generating interference from the transmitter of a link to the receiver of another link (on any resource block) \cite{mogensen2007lte,ho2015power}, the \ac{SINR} that link $k$ experiences can be approximated by 
\vspace{-.5ex}  
\begin{equation}
\sinr_k(\vw) \approx \frac{p_k v_{k,k}}{\sum\limits_{l\neq k} v_{k,l}p_l w_l + \sigma_k^2} = \frac{p_k}{\left[\tilde{\mV}\diag(\vp)\vw + \tilde{\ve{\sigma}}\right]_k},
\label{eqn:SINR}
\vspace{-.5ex} 
\end{equation}
where $\tilde{\mV}\in\RN^{K\times K}$ denotes the {\it interference coupling matrix} with the $(k,l)$-th entry defined as $v_{k,l}/v_{k,k}$, and $\tilde{\ve{\sigma}}:=[\sigma_1^2/v_{1,1}, \ldots, \sigma_K^2/v_{k,k}]\in\RP^K$. 

The achievable rate (in bit/s) of service $k$ is computed by 
\vspace{-1ex}
\begin{equation}
r_k(\vw)  = w_k\overline{W}\log_2(1 + \sinr_k(\vw)).
\label{eqn:DLrate}
\vspace{-1ex}
\end{equation} 
The objective of the first algorithm proposed here is to maximize the worst-case service-specific {\it \ac{QoS} satisfaction level}, defined as the ratio of the achievable rate $r_k(\vw)$ to the rate demand $\bar{r}_k$, subject to a per-\ac{BS} load constraint. Formally, the problem is stated as follows:
\vspace{-1ex}
\begin{subequations}\label{eqn:Maxmin_w}
\begin{align}
\maximize_{\vw\in\RN^K} \ & \min_{k\in\Ks} \  r_k(\vw)/\bar{r}_k \label{eqn:Maxmin_w_1}\\
\subject \ &  \|\mA\vw\|_{\infty}\leq \theta, \label{eqn:Maxmin_w_3}
\vspace{-1ex}
\end{align}
\end{subequations}
where $\|\cdot\|_{\infty}$ denotes the $L^{\infty}$-norm. Note that \eqref{eqn:Maxmin_w_3} implies a resource reuse factor of $1$, and with full load constraint we have $\theta = 1$; i.e., $(\forall n \in\Ns) \sum_{k\in\Ks} a_{n,k}w_k\leq 1$. 

In this work, we are interested not only in the solution to the above problem, but also in answering the following question:\\
{\it Is the load limit $\theta = 1$ with resource reuse factor $1$ an efficient operation point?}
\vspace{-1ex}
\subsection{Optimal Resource Allocation and Performance Limits}\label{subsec:DL_ResAllo}
To study the resource efficiency of the solution to Problem \eqref{eqn:Maxmin_w} as a function of the budget $\theta$, we start with next technical result. The proof is omitted because it is similar to that in \cite[Ex. 2]{cavalcante2014toward}.
\vspace{-.5ex}
\begin{lemma}
Suppose that the \ac{SINR} is modeled as in \eqref{eqn:SINR}, then the mapping $\ve{T}:=\left[T^{(1)}(\vw), \ldots, T^{(K)}(\vw)\right]$, where 
$(\forall k\in\Ks) \ T^{(k)}(\vw):\RN^K\to\RP: \vw\mapsto \bar{r}_k/\left(\overline{W}\log_2(1 + \sinr_k(\vw))\right)$, is an \ac{SIF}.
\label{lem:SIF_T_w}
\end{lemma}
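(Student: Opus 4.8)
The plan is to verify, for every $k\in\Ks$, the three axioms of Definition~\ref{def:SIF} directly for the scalar component $T^{(k)}$; the assertion for the vector mapping $\ve{T}$ then follows immediately from the last sentence of Definition~\ref{def:SIF}. It is convenient to set $I_k(\vw):=\left[\tilde{\mV}\diag(\vp)\vw+\tilde{\ve{\sigma}}\right]_k=\sum_{l\neq k}(v_{k,l}/v_{k,k})\,p_l w_l+\sigma_k^2/v_{k,k}$, which is affine in $\vw$ with \emph{nonnegative} coefficients and a \emph{strictly positive} constant term (recall $v_{k,k}>0$ and $\tilde{\ve{\sigma}}\in\RP^K$), and to write $T^{(k)}(\vw)=(\bar{r}_k/\overline{W})\,\phi\big(\sinr_k(\vw)\big)$ with $\sinr_k(\vw)=p_k/I_k(\vw)$ and $\phi(s):=1/\log_2(1+s)$. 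The only facts about $\phi$ used below are that it is positive, finite, and strictly decreasing on $(0,\infty)$, and I would use $\bar{r}_k>0$, $p_k>0$ throughout.

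For the \emph{nonnegative effective domain} axiom: for every $\vw\in\RN^K$ we have $0<\sigma_k^2/v_{k,k}\le I_k(\vw)<\infty$, hence $0<\sinr_k(\vw)<\infty$, hence $\log_2(1+\sinr_k(\vw))\in(0,\infty)$ and $T^{(k)}(\vw)\in\RP$; thus $\dom T^{(k)}=\RN^K$. For \emph{monotonicity}: if $\ve{0}\le\vw\le\vw'$, the nonnegativity of the coefficients of $I_k$ gives $I_k(\vw)\le I_k(\vw')$, hence $\sinr_k(\vw)\ge\sinr_k(\vw')$, and $T^{(k)}(\vw)\le T^{(k)}(\vw')$ because $\phi$ is decreasing.

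The main step is \emph{scalability}. Fixing $\alpha>1$ and $\vw\in\RN^K$, I would first observe that scaling the argument by $\alpha$ multiplies the linear part of $I_k$ by $\alpha$ but leaves its positive constant term unchanged, so $I_k(\alpha\vw)\le\alpha\,I_k(\vw)$ and hence $\sinr_k(\alpha\vw)\ge\sinr_k(\vw)/\alpha$; since $\phi$ is decreasing, $T^{(k)}(\alpha\vw)\le(\bar{r}_k/\overline{W})\,\phi\big(\sinr_k(\vw)/\alpha\big)$. It then remains to show $\phi(s/\alpha)<\alpha\,\phi(s)$ for $s:=\sinr_k(\vw)>0$, which rearranges to $\log_2(1+s)<\alpha\log_2(1+s/\alpha)$, i.e.\ $(1+s/\alpha)^{\alpha}>1+s$; this is Bernoulli's inequality $(1+x)^{\alpha}\ge 1+\alpha x$ for $\alpha>1$, applied with $x=s/\alpha>0$, where strictness holds because $x\neq 0$. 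Chaining the two bounds yields $T^{(k)}(\alpha\vw)<\alpha\,T^{(k)}(\vw)$.

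Combining the three parts over all $k\in\Ks$ establishes that each $T^{(k)}$, and therefore $\ve{T}$, is an SIF. I do not expect a real obstacle: the only step that requires a genuine estimate is scalability, and its crux is the elementary scalar inequality $(1+s/\alpha)^{\alpha}>1+s$. An alternative worth keeping in mind is to note that $T^{(k)}$ is the composition of the SIF $I_k$ with an increasing, star-shaped transformation of its value and to invoke the composition argument of \cite[Ex.~2]{cavalcante2014toward}, which is the route indicated in the statement of the lemma.
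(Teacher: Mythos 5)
Your proof is correct. The paper itself omits the proof of Lemma~\ref{lem:SIF_T_w}, saying only that it is ``similar to that in \cite[Ex.~2]{cavalcante2014toward}'', so your write-up supplies details the paper never gives. All three axioms are verified soundly: the positivity and monotonicity arguments are immediate from $I_k(\vw)\geq\sigma_k^2/v_{k,k}>0$ and the sign structure of $\tilde{\mV}\diag(\vp)$, and the scalability step correctly isolates the crux, namely that the strictly positive noise term is \emph{not} scaled by $\alpha$, so $I_k(\alpha\vw)<\alpha I_k(\vw)$, combined with the scalar inequality $\bigl(1+s/\alpha\bigr)^{\alpha}>1+s$ obtained from Bernoulli.

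The route differs in flavor from the one the citation points to. The argument in \cite[Ex.~2]{cavalcante2014toward} (and in the related load-coupling literature) typically establishes that each component function is positive and \emph{concave} on $\RN^K$ and then invokes the general fact that positive concave functions are standard interference functions; concavity of $t\mapsto 1/\log_2(1+p/t)$ composed with an affine map does the work, and scalability comes for free from concavity plus positivity at the origin. Your direct verification is more elementary --- it needs nothing beyond monotone composition and Bernoulli's inequality --- at the cost of being specific to this functional form, whereas the concavity route generalizes more readily to other rate models. One cosmetic remark: Definition~\ref{def:SIF} formally takes $f:\R^K\to\RP\cup\{\infty\}$ with $\dom f=\RN^K$; your verification on $\RN^K$ matches the lemma's stated domain and is the intended reading, but if you want to be pedantic you should note that $T^{(k)}$ is extended by $+\infty$ off the nonnegative orthant so that the effective-domain axiom is literally satisfied.
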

\vspace{-1ex}

By using Lemma \ref{lem:SIF_T_w} and \cite[Prop. 1]{Liao17} we verify that the optimal solution $\vw^{\ast}$ to \eqref{eqn:Maxmin_w} satisfies $(\forall k\in\Ks) \ u_k(\vw^{\ast}) = c^{\ast}$ \cite[Prop. 1]{Liao17} for some $c^\star\in\RP$. This result can be alternatively obtained as follows. Assuming that each \ac{BS} serves at least one user and that every user is served by a \ac{BS}, which guarantees that each \ac{BS} serves a nonempty and unique set of users, we have that all rows of the assignment matrix $\mA$ are linearly independent. Therefore, $\mA$ is a nonnegative full (row) rank matrix, so  the function $g:\R^K\to\RN:\vw\mapsto\|\mA|\vw|\|_{\infty}$ (which in particular implies $g(\vx)=\|\mA\vw\|$ for $\vw\in\R_+^K$), where $|\cdot|$ denotes the coordinate-wise absolute value of a vector, is a monotone norm.   
Thus, the problem in \eqref{eqn:Maxmin_w} is an instance of that in \eqref{eqn:Maxmin}, and the solution to \eqref{eqn:Maxmin_w} can be easily obtained with the iterations in \eqref{eqn:FP_w} as explained in Section~\ref{sec:def}. Furthermore, by using Proposition \ref{prop:asymp}, we can compute the asymptotic mapping $\ve{T}_{\infty}$ associated with $\ve{T}$. By doing so, as shown below, we are able to compute $(\vw_{\infty}, \lambda_{\infty})$ defined in Proposition \ref{prop:T_infty_compute}, and the performance limits in Proposition \ref{prop:PerformLimits} become readily available.  
\begin{proposition}
\vspace{-1ex}
Let $\ve{T}:\RN^K\to\RP^K:\vw\mapsto\left[T^{(1)}(\vw), \ldots, T^{(K)}(\vw)\right]$ be as defined in Lemma \ref{lem:SIF_T_w}. Suppose that $\vp\in\RP^K$ and $\mVt$ is irreducible, implying that each service is interfered by at least another of the services. The asymptotic mapping $\ve{T}_{\infty}:\RN^K\to\RN^K:\vw\mapsto[T^{(1)}_{\infty}(\vw), \ldots, T^{(K)}_{\infty}(\vw)]$ associated with $\ve{T}$ is given by:  
\begin{align}
\vspace{-1ex}
\ve{T}_{\infty}(\vw) & = \diag(\vp)^{-1}\ma{\Phi}\mVt\diag(\vp)\vw \label{eqn:T_infty_w_1}\\
\mbox{where } \ma{\Phi} & :=\frac{\ln(2)}{\overline{W}} \diag\left(\bar{r}_1,\ldots, \bar{r}_K\right) \nonumber.
\vspace{-1ex}
\end{align}
Furthermore, there exists a unique positive solution $(\vw_{\infty},\lambda_{\infty})\in\RP^K\times\RP$ to the \ac{CEVP}
\vspace{-1ex}
\begin{equation}
\ve{T}_{\infty}(\vw_{\infty}) = \lambda_{\infty}\vw_{\infty}, \ \|\mA\vw_{\infty}\|_{\infty} = 1\label{eqn:condi_eigen},
\end{equation}
which is given by 
\begin{equation}
\lambda_{\infty} = \lambda_{\ast}, \ \vw_{\infty} = \vw_{\ast}/\|\mA \vw_{\ast}\|_{\infty}
\label{eqn:solutionToCondEigen},
\end{equation}
\label{prop:T_infty_compute}
where $\lambda_{\ast}$ and $\vw_{\ast}$ are, respectively, the unique largest real eigenvalue and a corresponding real eigenvector of the matrix $\mG :=\diag(\vp)^{-1}\ma{\Phi}\mVt\diag(\vp)$. 

\end{proposition}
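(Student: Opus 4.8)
The plan is to prove the three assertions in turn: (1) the formula \eqref{eqn:T_infty_w_1} for the asymptotic mapping; (2) existence and uniqueness of a positive solution $(\vw_\infty,\lambda_\infty)$ to the CEVP \eqref{eqn:condi_eigen}; and (3) the explicit identification \eqref{eqn:solutionToCondEigen} in terms of the Perron data of $\mG$. For part (1), I would start from the definition in Proposition \ref{prop:asymp}, $T^{(k)}_\infty(\vw)=\lim_{h\to\infty} T^{(k)}(h\vw)/h$, and substitute the explicit form $T^{(k)}(\vw)=\bar r_k/(\overline W\log_2(1+\sinr_k(\vw)))$ with $\sinr_k$ from \eqref{eqn:SINR}. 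For fixed $\vw\in\RP^K$, as $h\to\infty$ the denominator $[\tilde{\mV}\diag(\vp)(h\vw)+\tilde{\ve\sigma}]_k$ grows linearly in $h$, so $\sinr_k(h\vw)\to 0$ like $c/h$; then $\log_2(1+\sinr_k(h\vw))\sim \sinr_k(h\vw)/\ln 2$, which gives $T^{(k)}(h\vw)/h\to (\ln 2\,\bar r_k/\overline W)\cdot [\tilde{\mV}\diag(\vp)\vw]_k / p_k$. Collecting coordinates yields exactly $\diag(\vp)^{-1}\ma\Phi\tilde{\mV}\diag(\vp)\vw$. The only subtlety is the boundary behavior: on coordinates $\vw$ with some $w_l=0$ the same limit computation goes through because $[\tilde{\mV}\diag(\vp)\vw]_k$ is still the relevant linear term (and when $\vw=\ve 0$ one checks $T^{(k)}_\infty(\ve 0)=0$, consistent with the linear formula). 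I should also note that $\ma\Phi$, $\diag(\vp)$ are positive diagonal and $\tilde{\mV}$ is irreducible, so $\mG:=\diag(\vp)^{-1}\ma\Phi\tilde{\mV}\diag(\vp)$ is a nonnegative irreducible matrix (it is a diagonal similarity of $\ma\Phi\tilde{\mV}$, and diagonal scaling preserves the zero pattern hence irreducibility).

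For parts (2) and (3), the strategy is to transfer the CEVP into the classical Perron--Frobenius setting. Since $\ve{T}_\infty$ is the linear map $\vw\mapsto\mG\vw$, equation $\ve{T}_\infty(\vw_\infty)=\lambda_\infty\vw_\infty$ is just the eigenvalue equation $\mG\vw_\infty=\lambda_\infty\vw_\infty$. Because $\mG$ is nonnegative and irreducible, Perron--Frobenius theory gives a simple eigenvalue $\lambda_*=\rho(\mG)>0$ with a (unique up to positive scaling) strictly positive eigenvector $\vw_*$, and no other eigenvalue possesses a nonnegative eigenvector. The normalization constraint $\|\mA\vw_\infty\|_\infty=1$ then pins down the scaling: under the standing assumption that $\mA$ has full row rank with each BS serving a nonempty set of users, $\vw\mapsto\|\mA\vw\|_\infty$ is a genuine norm on $\R^K$ (this is the monotone-norm fact already established in the text right before Lemma \ref{lem:SIF_T_w}), so $\|\mA\vw_*\|_\infty>0$ and we may set $\vw_\infty=\vw_*/\|\mA\vw_*\|_\infty$, $\lambda_\infty=\lambda_*$. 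Uniqueness of the positive solution then follows: any positive solution $(\vw,\lambda)$ has $\vw$ a positive eigenvector of $\mG$, forcing $\lambda=\rho(\mG)=\lambda_*$ and $\vw\propto\vw_*$ by Perron--Frobenius, and the norm constraint removes the remaining scalar freedom.

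I expect the routine-but-careful asymptotic limit in part (1) — in particular handling it uniformly on all of $\RN^K$ including the boundary of the orthant, and making sure the ``$\log(1+x)\sim x$'' step is applied correctly to the vanishing SINR — to be the main place where care is needed; parts (2)–(3) are essentially a direct invocation of Perron--Frobenius once irreducibility of $\mG$ is recorded. A secondary point worth stating explicitly is that the hypotheses $\vp\in\RP^K$ and $\tilde{\mV}$ irreducible are exactly what guarantee $\mG$ nonnegative irreducible; without irreducibility one would only get a nonnegative (possibly non-positive, non-simple) Perron eigenvector and uniqueness could fail, so this hypothesis should be flagged as the hinge of the argument.
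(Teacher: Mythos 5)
Your proposal is correct and follows essentially the same route as the paper: it linearizes the logarithm at vanishing SINR to obtain the linear asymptotic mapping (the paper does this via L'H\^opital's rule, you via $\log_2(1+x)\sim x/\ln 2$, which is the same computation), and then invokes Perron--Frobenius for the nonnegative irreducible matrix $\mG$ together with the normalization $\|\mA\vw_{\ast}\|_{\infty}$ to establish existence and uniqueness. Your explicit handling of the boundary cases (coordinates where $[\mVt\diag(\vp)\vw]_k=0$) is a minor point of added care over the paper's argument, not a different approach.
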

\begin{proof}

Applying Proposition \ref{prop:asymp}, we have 
\begin{align*}
&(\forall k\in\Ks)(\forall\vw\in\RN^K) \ T^{(k)}_\infty = \lim_{h\to \infty} T^{(k)}(h\vw)/h\\
&= \lim_{h\to\infty} \bar{r}_k/\left(h\overline{W}\log_2\left(1+ p_k/\left[h\mVt\diag(\vp)\vw + \tilde{\ve{\sigma}}\right]_k\right)\right).
\end{align*} 
Defining $g(h):=\log_2\left(1+ p_k/\left[h\mVt\diag(\vp)\vw + \tilde{\ve{\sigma}}\right]_k\right)$ and $f(h):=\bar{r}_k/(\overline{W}h)$, we have that $\lim_{h\to\infty}g(h)= 0$, $\lim_{h\to\infty}f(h) = 0$, $g'(h)\neq 0$ for $h \in\R_+$, and that $\lim_{h\to \infty} f'(h)/g'(h)$ exists. By using {\it L'H\^{o}pital's rule}, we verify that $T^{(k)}_\infty = \ln(2)\bar{r}_k\left[\mVt\diag(p)\vw\right]_k/(\overline{W}p_k)$. We obtain \eqref{eqn:T_infty_w_1} by writing $\ve{T}_{\infty}:=\left[T_{\infty}^{(1)}, \ldots, T_{\infty}^{(K)}\right]$ in matrix form. 
%
We can also verify that $\mG$ is nonnegative and irreducible. As a result, by  Perron-Frobenius theory \cite[p. 673]{meyer2000matrix}, $\mG$ has a simple positive eigenvalue $\lambda_{\ast}\in\RP$ associated with a positive right eigenvector $\vw_{\ast}$. Furthermore, any other real eigenvalue $\lambda$ satisfies $|\lambda|\le \lambda_{\ast}$, and if $T_\infty(\vw) = \mG\vw = \lambda \vw$ for some $\vw\in\R_{+}^K\backslash\{\boldsymbol{0}\}$, then we have $\lambda = \lambda_\ast$ and $\vw = c\vw_\ast$ for some $c\in\R_{++}$. In particular, with $c=1/\|\mA\vw_{\ast}\|_{\infty}$, $\vw_\infty = c \vw_\ast$, and $\lambda_\infty = \lambda_{\ast}$, we deduce $T_\infty(\vw_{\infty}) = \mG \vw_{\infty}  = \lambda_{\infty}\vw_{\infty}$ and $\|\mA\vw_{\infty}\|_{\infty} = \left\|\mA\left(\vw_{\ast}/\|\mA\vw_{\ast}\|_{\infty}\right)\right\|_{\infty} = \|\mA\vw_{\ast}\|_{\infty}/\|\mA\vw_{\ast}\|_{\infty}  = 1$. The above implies that (\ref{eqn:solutionToCondEigen}) is the unique solution to the CEVP in  (\ref{eqn:condi_eigen}), and the proof is complete. 
\end{proof}
\vspace{-1.5ex}
\begin{Rem} Proposition \ref{prop:T_infty_compute} shows an efficient method to compute $(\vw_{\infty},\lambda_{\infty})$ as the solution to the \ac{CEVP} \eqref{eqn:condi_eigen}. Note that, \emph{in this specific case,  the asymptotic mapping associated with a nonlinear mapping (defined in Lemma \ref{lem:SIF_T_w}) becomes linear}, as shown in \eqref{eqn:condi_eigen}.
Moreover, to compute the eigenvalues of $\mG = \diag(\vp)^{-1}\ma{\Phi}\mVt\diag(\vp)$, we can equivalently compute the  eigenvalues of the matrix $\ma{\Phi}\mVt$, which is independent of $\vp$\footnote{Suppose that $\mD\in\R^{K\times K}$ is an invertible matrix  and $\mX\in\R^{K\times K}$,  the eigenvalues of the matrices $\mX$ and $\mD\mX\mD^{-1}$ are the same \cite[Rem. 1]{cavalcante2016elementary}.}. 
%
\end{Rem}
\vspace{-1ex}
\subsection{Partial Resource Muting}\label{subsec:DL_muting}
%
Having $(\vw_{\infty}, \lambda_{\infty})$ in hand, we are now able to compute $\theta^{\trans}$ with \eqref{eqn:TransPoint} and answer the following question related to the resource efficiency raised in Section~\ref{subsec:DL_SysModel}:\\ 
{\it If the transition point yields $\theta^{\trans}<1$,   full resource reuse (i.e., $\theta=1$) may not be an efficient operation point because we are likely operating in an interference limited region where the resource availability $\theta$ can be decreased without noticeable changes in utility -- see Fig~\ref{fig:PowerUtility_Bound}.}

The new challenge arises:\\
{\it How to improve the resource efficiency if the network is operating in an interference-limited inefficient region?} 

Since the bottleneck users usually consume most of the resources and impair the performance, we consider muting partial resources in the neighboring cells to mitigate the interference received in (and generated by) the bottleneck users. Based on $(\lambda_{\infty}, \vw_{\infty})$ obtained  in \eqref{eqn:solutionToCondEigen} and the derived transition point $\theta^{\trans}$, we propose a resource muting scheme consisting of the following steps.
%
%
\subsubsection{Triggering the Resource Muting Scheme}
$\theta^{\trans}<1$ implies inefficient usage of resources in the region $[\theta^{\trans}, 1]$ due to heavy interference. Instead of allocating all resources in at least one \ac{BS} to achieve only a slight increase in utility, the network may better benefit from muting partial resources to reduce the interference of bottleneck users. Therefore, the resource muting scheme can be triggered if $\theta^{\trans}<1$. 
\subsubsection{Modifying the Interference Pattern and Load Constraints}
Suppose that a set of bottleneck users, denoted by $\Ks^{(\text{b})}$, is selected (details about the selection of bottleneck users are given in the next subsection).  The motivation is to allocate $\Ks^{(\text{b})}$ to a muting region, such that each user $k\in\Ks^{(\text{b})}$ neither receives interference from the neighboring cells nor generates interference to those cells, as shown in Fig. \ref{fig:ResourceMutingDL}. 
\begin{figure}
\centering
		    \includegraphics[width=.8\columnwidth]{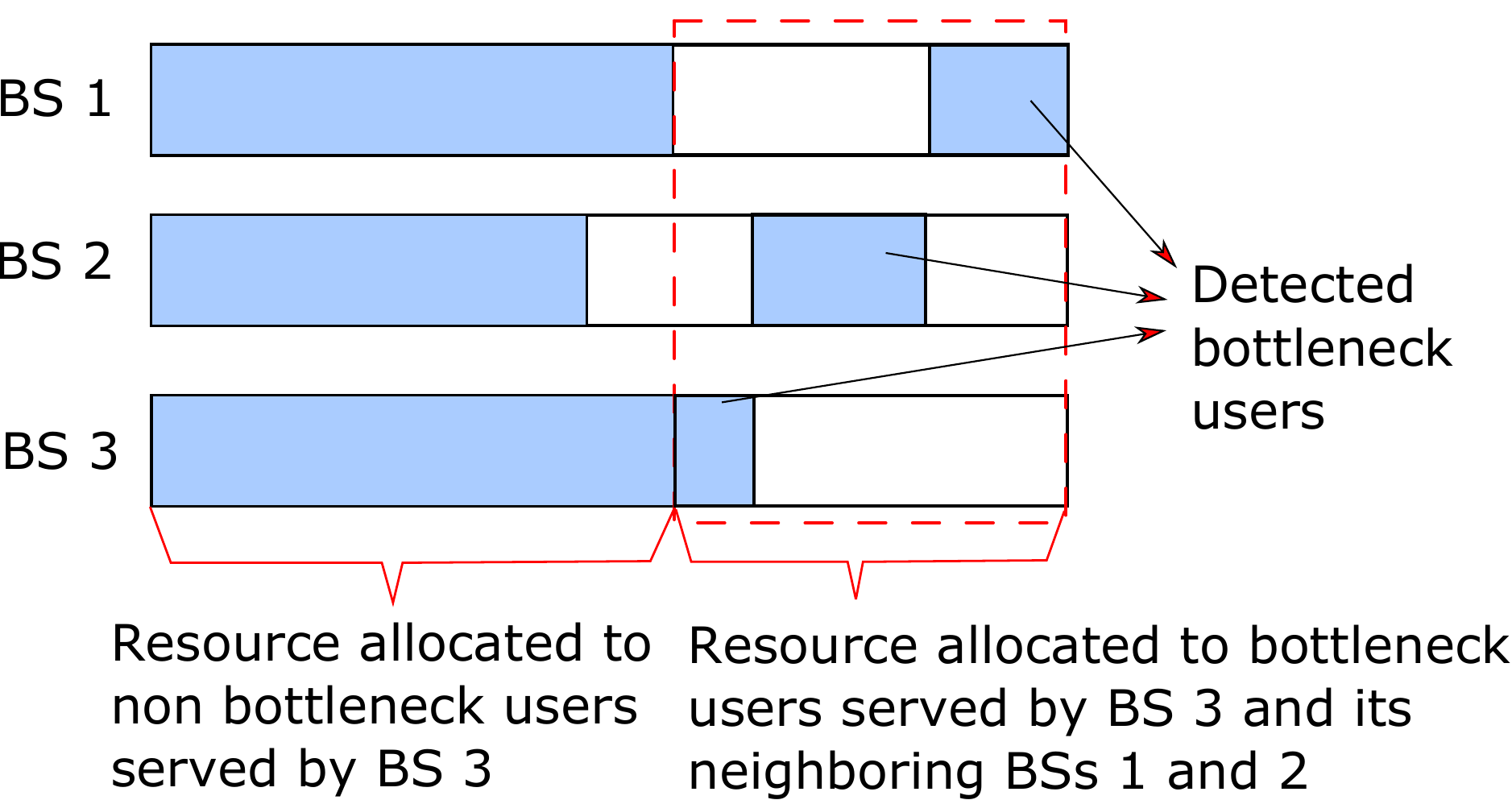}
\caption{Resource muting region in downlink.}
\label{fig:ResourceMutingDL}
\vspace{-3ex}
\end{figure}
Since the received/generated interference of the bottleneck users from/to the neighboring cells is canceled, the interference coupling matrix is updated as follows
\begin{equation}
\vspace{-1ex}
 (\forall k\in\Ks^{(\text{b})})(\forall m\in\Ns_{k})(\forall l\in\Ks_{m}) v_{k,l} = v_{l,k} = 0,
\label{eqn:updateInterference_2}
\end{equation}
where $\Ns_k$ denotes the set of neighboring cells of cell $k$.

On the other hand, to incorporate the muting region in the constraint, we modify the load constraint in \eqref{eqn:Maxmin_w_3} as shown below:
\begin{subequations}\label{eqn:modified_load_constaint}
\begin{align}
g(\vw) & \leq 1, \label{eqn:modified_load_constraint_1}\\
g:\R^K & \to\RN:\vw \mapsto \max_{n\in\Ns} \left( \sum_{k\in\Ks_n} |w_k| + \sum_{m\in\Ns_n}\sum_{l\in\Ks^{(\text{b})}_m}|w_l| \right)\label{eqn:modified_load_constaint_2},
\vspace{-2ex}
\end{align}
\end{subequations}
where $\Ks^{(\text{b})}_m$ denotes the set of bottleneck users in cell $m$. Constraint \eqref{eqn:modified_load_constaint} implies that, for each cell $n$, the sum of resources allocated to its serving users $\Ks_n$ (including both the bottleneck and non bottleneck users) and the resources allocated to all the bottleneck users served by its neighboring cells $m\in\Ns_n$ is limited to the total amount of resources. Note that the modified function $g$ is a monotone norm. Thus, the problem with the modified constraint \eqref{eqn:modified_load_constaint} is still an instance of that in \eqref{eqn:Maxmin}. 

\subsubsection{Detecting Bottleneck Users}\label{subsubsec:BottleneckDetect}
The asymptotic behavior $(\lambda_{\infty}, \vw_{\infty})$ provides a good guide to detect efficiently the bottleneck services, because it indicates the existing limits of the utility and the fraction of allocated resources as $\theta\to \infty$. More precisely, let the $k$th entry of $\vw_{\infty}$ be denoted by $w_{\infty}^{(k)}$. The larger the value of $w_{\infty}^{(k)}$, the higher the chance that the corresponding user impairs the system performance owing to the large amount of occupied resources, and, consequently, the higher the possibility that this user causes heavy interference. 

To show how $\vw_{\infty}$ can be useful in the  development of efficient heuristics for selecting the set of bottleneck users, we compare the following two approaches: {\it exhaustive search} and {\it successive selection}. As we discuss below, the latter approach is suboptimal, but its computational complexity is subtantially smaller than that of the former approach.
\begin{itemize}
\item {\it Exhaustive search}. Given a set of candidate users, denoted by $\Ks^{(c)}$, we find an optimal subset $\Ks^{(b)}\subseteq\Ks^{(c)}$ that provides the maximum utility. This problem poses a serious computational challenge: the number of subsets is exponential in the size of the domain. It might require an exhaustive search over $\sum_{i=1,\ldots, |\Ks^{(c)}|} {|\Ks^{(c)}| \choose i}$ possible subsets.
%
\item {\it Successive selection}: By sequentially selecting users with the highest values $w_{\infty}^{(k)}$, the resulting optimized utility has a general trend of first increasing and then decreasing, as shown in Fig. \ref{fig:bottleneck}.  Therefore, instead of exhaustively searching for the best set of bottleneck users, we propose an efficient heuristic to approximate the solution. We successively add the users with the next highest value of $w^{(k)}_{\infty}$ until the utility does not increase. 
\end{itemize}
Fig. \ref{fig:bottleneck} illustrates the motivation for {\it successive selection}. It shows the optimized utility depending on the number of  bottleneck users for four simulation instances, each of them with a different distribution for the user locations and traffic demands. Suppose that the selection of bottleneck users is based on the rank order the components of $\vw_{\infty}$. Let us sort the entries in $\vw_{\infty}$ in descending order, with the order of $w_{\infty}^{(k)}$ denoted by $o_k$. For example, if the number of bottleneck users is $3$, then, the highest ranked users, i.e., $\{k \in \Ks: o_k\in\{1,2,3\}\}$ are selected. The curves generated by different instances show a common trend of utility increase when the users with the highest values of $w_{\infty}^{(k)}$ are allocated in the muting region. Then, if too many bottleneck users are selected, the utility decreases. Such pattern reflects the tradeoff between resource utilization and interference mitigation. 
Therefore, although the curves are not always piecewise monotonic, numerical results show that adding users successively until the utility starts to decrease provides a simple means of identifying bottleneck users.
\subsubsection{Fixed Point Iteration}\label{subsubsec:FP_iteration_DL}
Once the set of bottleneck users is updated, the load constraint is modified as shown in \eqref{eqn:modified_load_constaint} to incorporate the muting scheme, and the fixed point iteration \eqref{eqn:FP_w} (with $\theta = 1$ and $\|\cdot\| = g(\cdot)$ as defined in \eqref{eqn:modified_load_constaint}) can be applied to solve Problem \eqref{eqn:Maxmin}. 
 \begin{figure}[t]
    \centering
		    \includegraphics[width=1\columnwidth]{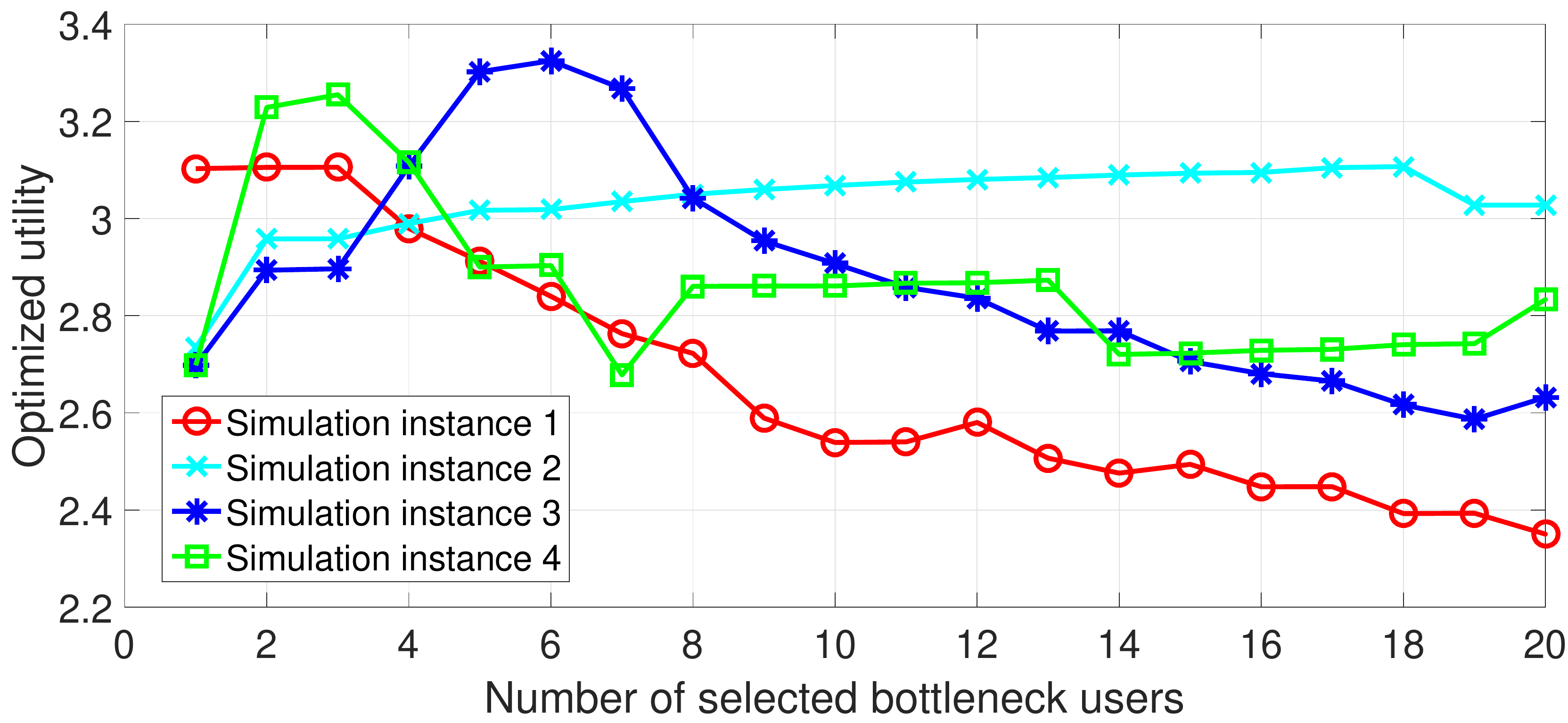}
        \caption{Example of successive selection of bottleneck users.}
 \label{fig:bottleneck}
\vspace{-2ex}
\end{figure}
Algorithm \ref{algo:AlgorithmDetectBottleneck} summarizes the proposed resource muting mechanism. 
\begin{algorithm}[t]
\vspace{-.5ex}
\SetKwInOut{Input}{input}
\SetKwInOut{Output}{output}
\caption{Partial Resource Muting}
\label{algo:AlgorithmDetectBottleneck}
\Input{$n\leftarrow 0$, $\Ks^{(\text{b})}(0) =\emptyset$
}
\Output{$\Ks^{(\text{b})}_{\ast}$, $\vw^{\ast}$}
Compute $\vw^{\ast}(0), c^{\ast}(0)$ using \eqref{eqn:FP_w} with $\theta=1$\;
Compute $\vw_{\infty}, \lambda_{\infty}$ and transition point $\theta^{\trans}$ using Proposition \ref{prop:PerformLimits} and \ref{prop:T_infty_compute}\;
\If{$\theta^{\trans}< 1$}{
Sort $w^{(k)}_{\infty}$ in descending order\;
\While{$n = 0$ or $c^{\ast}(n) - c^{\ast}(n-1)\geq 0$}{
$\Ks^{(\text{b})}_{\ast}\leftarrow \Ks^{(\text{b})}_{\ast}(n)$, $\vw^{\ast}\leftarrow\vw^{\ast}(n)$\;
$n\leftarrow n+1$\;
$\Ks^{(\text{b})}(n)\leftarrow \Ks^{(\text{b})}(n-1)\cup\{k|o_k = n\}$\;
Update interference pattern and load constraints with \eqref{eqn:updateInterference_2} and \eqref{eqn:modified_load_constaint}\;
Compute $c^{\ast}(n), \vw^{\ast}(n)$ corresponding to $\Ks^{(\text{b})}(n)$ with fixed point iteration \eqref{eqn:FP_w}\;
}
}\ElseIf{$\theta^{\trans}\geq 1$}{
$\Ks^{(\text{b})}_{\ast}\leftarrow \Ks^{(\text{b})}_{\ast}(0)$, $\vw^{\ast}\leftarrow\vw^{\ast}(0)$
}
\BlankLine
\vspace{-1ex}
\end{algorithm}
\subsection{Numerical Results}\label{subsec:DL_Numerical}
We consider a real-world scenario with 15 three-sector macro \acp{BS} and 10 pico \acp{BS} in the city center of Berlin, Germany. The locations of the macro \acp{BS} are given by the real data set \cite{momentum}, while the pico \acp{BS} are placed uniformly at random in the playground. 
The  macro \acp{BS} are equipped with directional antennae with transmit power of $43$dBm, while the pico \acp{BS} have omnidirectional antennae with transmit power of $30$dBm. The total bandwidth is 10 MHz. The macrocell pathloss is obtained from the real data set \cite{momentum}, and the picocell pathloss uses the 3GPP LTE model in \cite{3GPP36814}. Uncorrelated fast fading characterized by Rayleigh distribution is implemented on top of the pathloss. A fixed number of users are randomly and uniformly distributed on the playground. The traffic demand per user is uniformly distributed between $[0,10]$ MBit/s with an average value of $5$ MBit/s.  
%
\subsubsection{Resource Efficiency with or without Resource Muting}\label{subsubsec:simu_ResourceEff_Muting}
Fig. \ref{fig:Comp_NonMuting} illustrates how the transition point of the utility and the resource efficiency change when the resource muting scheme is activated. Fig. \ref{fig:MutingRegion_U} shows that the resource efficient region increases (or, equivalently, the interference-limited region decreases) when applying resource muting. The achieved utility $U(\theta)$ and resource efficiency $E(\theta)$  with the muting scheme are significantly higher at $\theta = 1$ (with the load constraint \eqref{eqn:modified_load_constaint}).
\subsubsection{Efficient Selection of Bottleneck Users}\label{subsubsec:simu_bottleneck}
To show that $\vw_{\infty}$ is useful to identify bottleneck users, we use Monte Carlo simulations to generate random locations and demands of users, and we compare the number of bottleneck users $\left|\Ks^{(\text{b})}\right|$ selected by {\it exhaustive search} and by {\it successive selection} as described in Section \ref{subsubsec:BottleneckDetect}. In the upper subfigure of Fig. \ref{fig:Exhaustive_vs_successive}, $500$ points of $\left(\left|\Ks^{(\text{b})}_{(\text{exh})}\right|, \left|\Ks^{(\text{b})}_{(\text{suc})}\right|\right)$ are plotted (note that there are overlapping points), with each corresponding to a distribution of user locations and traffic demands, where $\Ks^{(\text{b})}_{(\text{exh})}$ and $\Ks^{(\text{b})}_{(\text{suc})}$ denote the set of bottleneck users selected by {\it exhaustive search} and by {\it successive selection}, respectively. The lower subfigure shows the empirical probability density of the number of bottleneck users. Although Fig. \ref{fig:Exhaustive_vs_successive} shows that {\it successive selection} generally selects less members than {\it exhaustive search}, it is shown in Fig. \ref{fig:DL_MutingCompare} that {\it successive selection} achieves similar performance to  {\it exhaustive search}. 
%
%
%
\subsubsection{Performance Improvement}\label{subsubsec:PerformImprove}
In Fig. \ref{fig:DL_MutingCompare} we compare the performance of four protocols for $K = 200$ users and $K = 400$ users. The performance is characterized by the \ac{CDF} of the achievable utility derived from $1000$ Monte Carlo simulations with random distribution of user locations and traffic demands. The four protocols are described below:
\begin{itemize}
\item[(\Rmnum{1})] {\it Non-muting}:  We solve Problem \eqref{eqn:Maxmin_w} with $\theta=1$ by applying the fixed point iteration in  \eqref{eqn:FP_w}.  
\item[(\Rmnum{2})] {\it Muting based on $\ve{w}_{\infty}$, successive selection}: We solve Problem  \eqref{eqn:Maxmin_w} with Algorithm \ref{algo:AlgorithmDetectBottleneck}. The interference coupling and the load constraints are modified according to \eqref{eqn:updateInterference_2} and \eqref{eqn:modified_load_constaint}, respectively.
\item[(\Rmnum{3})] {\it Muting based on $\vw_{\infty}$, exhaustive search}: Similar to Algorithm \ref{algo:AlgorithmDetectBottleneck}, but the bottleneck users are selected by exhaustive search. 
\item[(\Rmnum{4})] {\it Muting based on $\ve{I}$, successive search}: Similar to Algorithm \ref{algo:AlgorithmDetectBottleneck}, but the bottleneck users are selected successively based on a different indicator $I_k:=\sum_{l\neq k} \left(p_lv_{k,l}w_l^{\ast} + p_k v_{l,k} w_k^{\ast}\right)$, which reflects the sum of the received and generated interference by user $k$.  
\end{itemize}
Comparing the performance of Protocol \Rmnum{1} with the other muting schemes in Fig. \ref{fig:DL_MutingCompare}, we observe that the muting schemes significantly improve the desired utility. Comparing Protocol \Rmnum{2} and \Rmnum{3}, we verify that successive selection performs only slightly worse than exhaustive search, but the computational effort is significantly reduced.  Furthermore,  comparing Protocol \Rmnum{3} and \Rmnum{4}, we verify that $\vw_{\infty}$ is a more appropriate indicator for bottleneck selection than the indicator based on interference measurements.  
%
%
\begin{figure}
    \centering
    \begin{subfigure}[t]{1\columnwidth}
        \includegraphics[width=\textwidth]{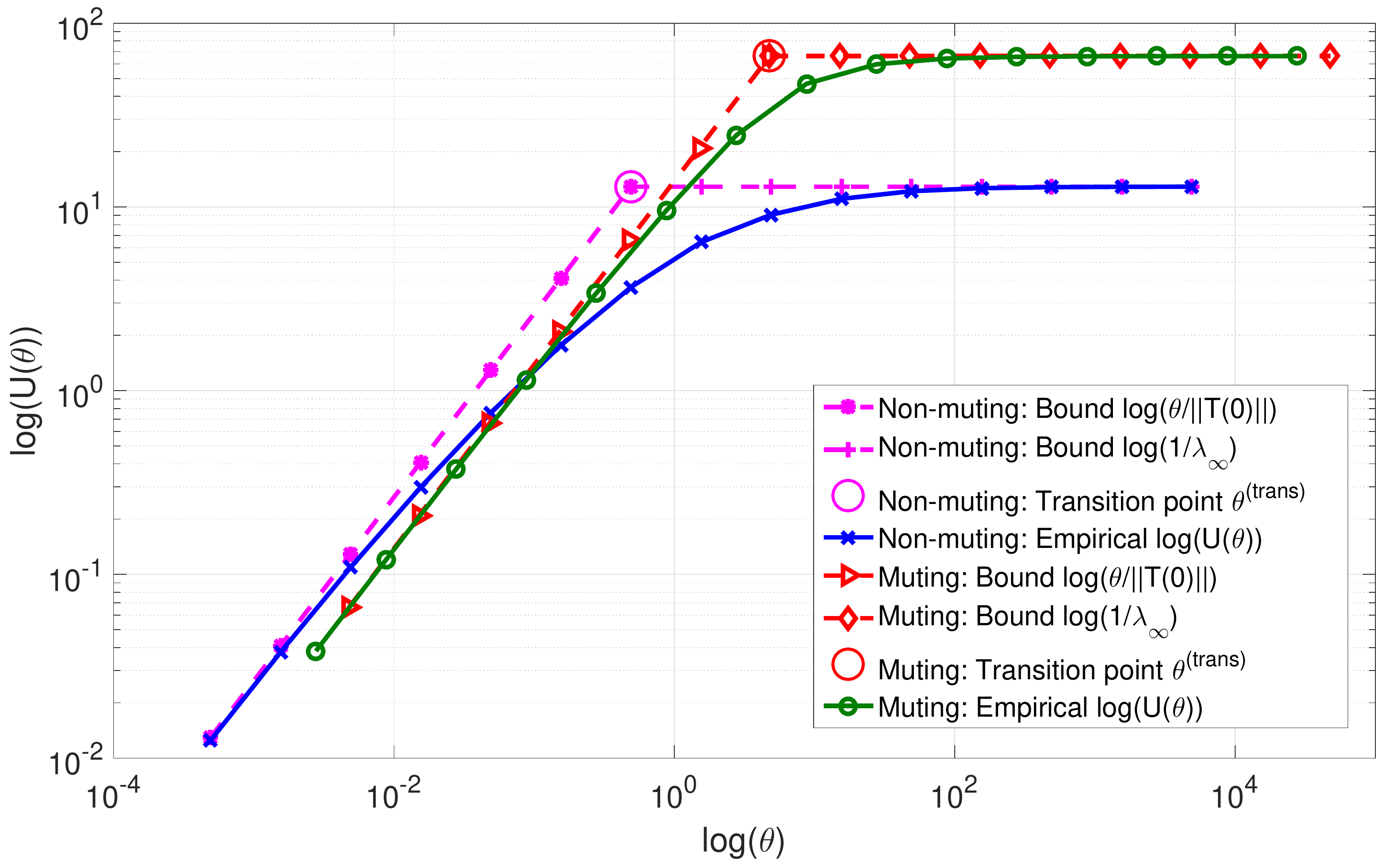}
        \caption{Log-log plot of utility depending on $\theta$.}
        \label{fig:MutingRegion_U}
    \end{subfigure}
    \begin{subfigure}[t]{1\columnwidth}
        \includegraphics[width=\textwidth]{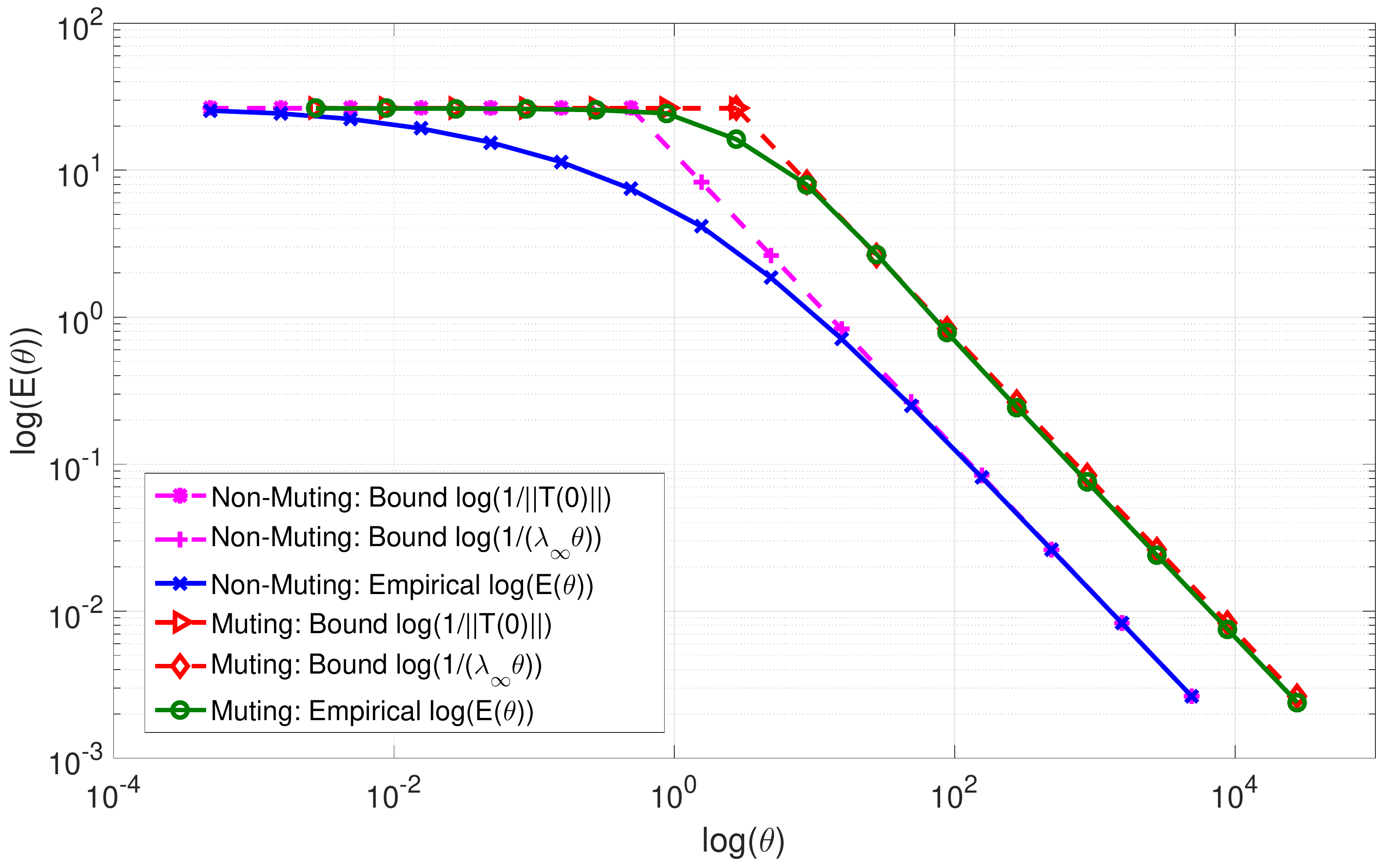}
        \caption{Log-log plot of resource efficiency depending on $\theta$.}
        \label{fig:MutingRegion_E}
    \end{subfigure}
    \caption{Comparison between the performance limits of non-muting scheme and muting scheme, $K= 200$.}
		\label{fig:Comp_NonMuting}
		\vspace{-4ex}
\end{figure}
%
%
%
%
 \begin{figure}[t]
    \centering
		    \includegraphics[width=1\columnwidth]{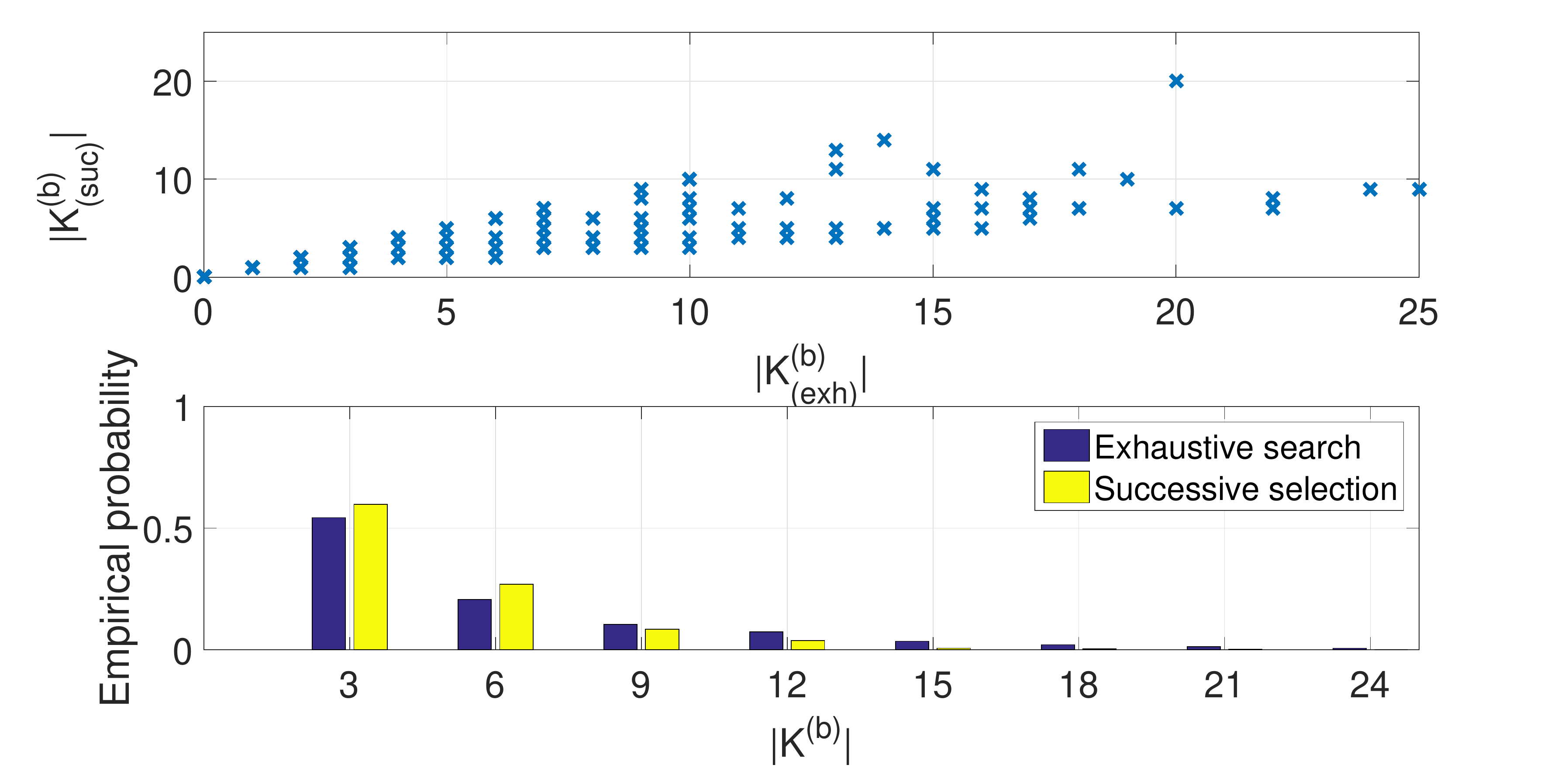}
								\vspace{-1ex}
        \caption{Number of bottleneck users selected by exhaustive searching and by successive selection.}
				\vspace{-3ex}
 \label{fig:Exhaustive_vs_successive}
\end{figure}
 \begin{figure}[t]
    \centering
		    \includegraphics[width=1\columnwidth]{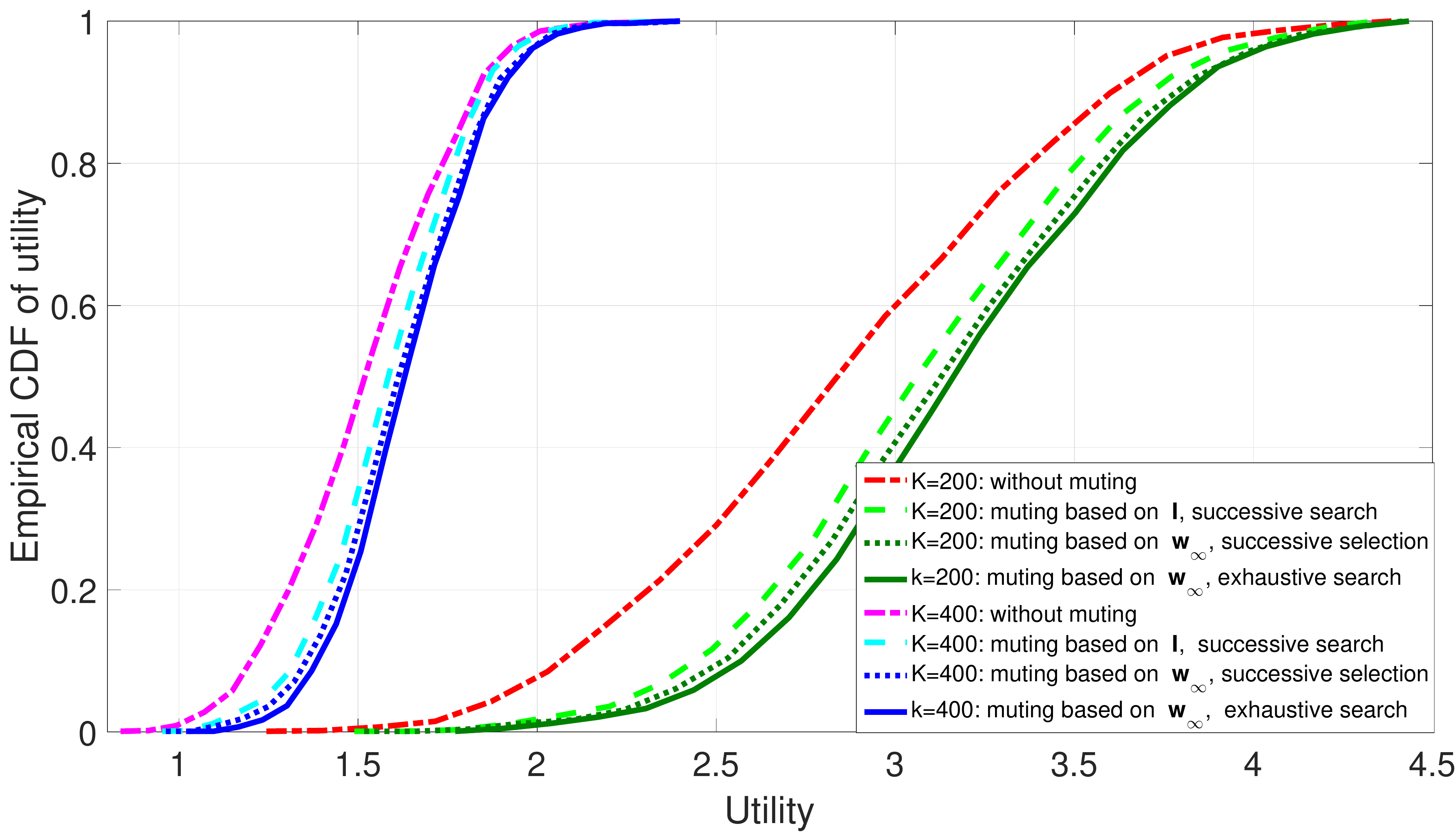}
				\vspace{-1ex}
        \caption{Performance comparison between max-min utility fairness with and without muting for $K = 200$ and $400$.}
				\vspace{-3ex}
 \label{fig:DL_MutingCompare}
\end{figure}
%

%
%
%
	%

%
%
%
\section{Extension of the Framework to 5G Networks}\label{sec:Extension_5G}
The framework can be extended to optimize 5G networks equipped with flexible duplex mechanisms. Flexible duplex is one of the key technologies in 5G to adapt to asymmetric \ac{UL} and \ac{DL} traffic with flexible resource allocation in joint time-frequency domain. The main challenge is the newly introduced \ac{IMI} between \ac{UL} and \ac{DL} as shown in Fig.\ref{fig:ULDLOverlapping}, which makes Problem \eqref{eqn:Maxmin_w} far more complex. In previous work \cite{Liao17} we have shown how to incorporate \ac{IMI} into the interference model, and we have developed a novel algorithm called \ac{SAFP} to approximate the solution to \eqref{eqn:Maxmin_w}. In this study we briefly describe the interference model and the \ac{SAFP} algorithm, and we put more focus on a new resource muting scheme based on the spectral properties of asymptotic mappings.
\vspace{-.5ex}
\subsection{Joint UL/DL System Model}
We now generalize the \ac{DL} system model defined in Section \ref{subsec:DL_SysModel} by considering a set of services $\Ks$, including both \ac{UL} and \ac{DL} services. As in previous sections, let the \ac{BS}-to-service assignment be denoted by $\mA\in\{0,1\}^{N\times K}$. Without resource muting, the utility max-min fairness problem can also be written in the general form \eqref{eqn:Maxmin_w}. However, the \ac{SINR} model needs to be modified to take \ac{IMI} into account. As shown in Fig. \ref{fig:ULDLOverlapping}, inter-cell interference appears in the overlapping resource region. We introduce the overlapping factors $\left(c_{k,l}\right)$, collected in $\mC(\vw)\in\RP^{K\times K}$, to incorporate the probability that intra- and inter-mode  interference appear for a given resource allocation $\vw$. Let $\ve{\nu}^{(x)}:=\left[\nu_1^{(x)},\ldots, \nu_N^{(x)}\right]$, $x\in\{\text{u}, \text{d}\}$ denote the \ac{UL} or \ac{DL} load (i.e., the fraction of occupied resources) of all cells that can be computed with $\vw$. For example, $\nu_n^{\ul}:=\sum_{k\in\Ks^{\ul}_n} w_k$ denotes the load of \ac{BS} $n$ in \ac{UL}, where $\Ks^{\ul}_n\subseteq\Ks_n$ is the set of \ac{UL} services in \ac{BS} $n$. The \ac{SINR} of service $k$ incorporating \ac{IMI} is approximated by
\begin{align}
\sinr_k(\vw) & \approx \dfrac{p_k}{\left[\left(\mC(\vw)\circ\mVt\right)\diag(\vp)\vw + \tilde{\ve{\sigma}}\right]_k} \label{eqn:SINR_withC}\\
\mbox{with } \mC(\vw) & := \left(c_{k,l}\right) \in\RN^{K\times K}, \label{eqn:MatrixC_0} \\
c_{k,l} & := 
\begin{cases}
 \left[ \left(\nu_{n_l}^{(x_l)} + \nu_{n_k}^{(x_k)}-1\right)/\nu_{n_k}^{(x_k)}\right]^+ & \mbox{  if } x_l \neq x_k \\
 \min\left\{1, \nu_{n_l}^{(x_l)}/\nu_{n_k}^{(x_k)}\right\} & \mbox{ if } x_l = x_k,\nonumber
\end{cases}
\end{align}  
where $\circ$ denotes the Hadamard product, $x_k\in\{\text{u}, \text{d}\}$ denotes the duplex mode of service $k$,  and $n_k$ denotes the serving \ac{BS} of $k$. The approximation is based on two probabilities:
\begin{itemize}
\item The overlapping factor $c_{k,l}$ is the ratio of the overlapping area between load $\nu_{n_l}^{(x_l)}$ and $\nu_{n_k}^{(x_k)}$ to the load $\nu_{n_k}^{(x_k)}$, which roughly indicates the probability that an arbitrary resource unit allocated to mode $x_l$ in \ac{BS} $n_l$ causes interference to \ac{BS} $n_k$ in mode $x_k$. Note that the first case $x_l\neq x_k$ corresponds to the \ac{IMI} between \ac{UL} and \ac{DL}, while $x_l = x_k$ corresponds to the intra-mode interference.
\item $w_l$ collected in $\vw$ serves as the probability that an arbitrary resource unit in $n_l$ is allocated to service $l$. 
\end{itemize} 
The multiplication of $c_{k,l}$ by $w_l$ loosely approximates the probability that a resource unit allocated to $l$ in duplex mode $x_l$ in \ac{BS} $n_l$ causes interference to service $k$ in duplex mode $x_k$ in \ac{BS} $n_k$.  
\subsection{Successive Approximation of Fixed Point}
Introducing $\mC$ into \eqref{eqn:MatrixC_0} removes the properties of $\ve{T}$ given in Lemma \ref{lem:SIF_T_w}, which further leads to possibly more than one fixed point of the resulting \ac{CEVP} \eqref{eqn:cevp}.  In \cite{Liao17} we developed a novel algorithm \ac{SAFP} to approximate the near-optimal fixed point of the \ac{CEVP}. The novel proposed algorithm assisted with {\it random initialization} and {\it successive approximation} is summarized below:
\begin{itemize}
\item The algorithm runs for $Z^{(\tmax)}$ times, where at the $i$-th time  different random initializations of $\vw^{(0)}:=\hat{\vw}_i$ and the corresponding $\hat{\mC}:=\mC(\hat{\vw}_i)$ are used.  
\item For each initialization, we iteratively perform the following two steps: 1) we use the fixed point iteration \eqref{eqn:FP_w} with respect to the approximated $\hat{\mC}$ and derive $\vw^{(t)}$, and 2) update $\hat{\mC} = \mC(\vw^{(t)})$ and increment $t$. The iteration stops if $\|\vw^{(t)}-\vw^{(t-1)}\|\leq \epsilon$, where $\epsilon$ is a distance threshold.
\item Each random initialization converges to a fixed point (not necessarily different from those derived from other initializations). We choose the solution with the maximum utility.
\end{itemize}
As shown in \cite{Liao17}, the algorithm converges for each random initialization. With a limited number of random initializations, the algorithm is able to find a solution with low computational complexity. 
\subsection{Partial Resource Muting}
We proposed partial resource muting scheme for 4G \ac{DL} wireless networks in Section \ref{subsec:DL_muting} to improve the resource efficiency. Along similar lines, the resource muting scheme can be tailored for \ac{IMI} mitigation in 5G networks enabling flexible duplex. 
To deal with the complex interference caused by \ac{IMI}, we have introduced \ac{SAFP} as the tool to  find efficiently the approximated solution to the \ac{CEVP} with the existence of possibly multiple fixed points. Hence, by replacing the fixed point iteration with \ac{SAFP} (line $10$ in Algorithm \ref{algo:AlgorithmDetectBottleneck}), the same steps proposed in Algorithm \ref{algo:AlgorithmDetectBottleneck} can be applied to optimize the joint \ac{UL}/\ac{DL} resource allocation, particularly the set of bottleneck users allocated to the muting region and the resource fraction allocated to them. 

It is worth mentioning that the steps used for triggering the muting scheme and for detecting the bottleneck users require the computation of the transition point $\theta^{\trans}$ and of the tuple $(\lambda_{\infty}, \vw_{\infty})$. However, with the modified \ac{SINR} model  \eqref{eqn:SINR_withC}, $\ve{T}(\vw)$ is not an \ac{SIF}, and Proposition \ref{prop:T_infty_compute} cannot be applied in a straightforward manner to compute the desired values. Therefore, we propose to approximate $\ve{T}(\vw)$ by replacing $\mC(\vw)$ with the converged approximation of $\hat{\mC}$ achieved by \ac{SAFP}. Let the approximation of $\ve{T}(\vw)$ be denoted by $\ve{T}_{\hat{\mC}}(\vw)$. Since $\hat{\mC}$ is  a matrix with known entries, $\ve{T}_{\hat{\mC}}(\vw)$ is an \ac{SIF}. Hence, we can use Proposition \ref{prop:asymp} to compute $\ve{T}_{\hat{\mC}, \infty}(\vw)$, and further derive $(\lambda_{\hat{\mC}, \infty}, \vw_{\hat{\mC}, \infty})$ with respect to $\ve{T}_{\hat{\mC}, \infty}(\vw)$ along similar lines to Proposition \ref{prop:T_infty_compute}. Using the same technique described in Section \ref{subsec:DL_muting}, we obtain the triggering parameter $\theta^{\trans}$ and the set of bottleneck users from $\left(\lambda_{\hat{\mC}, \infty}, \vw_{\hat{\mC}, \infty}\right)$. 
%
\begin{figure}[t]
\centering
\includegraphics[width=.75\columnwidth]{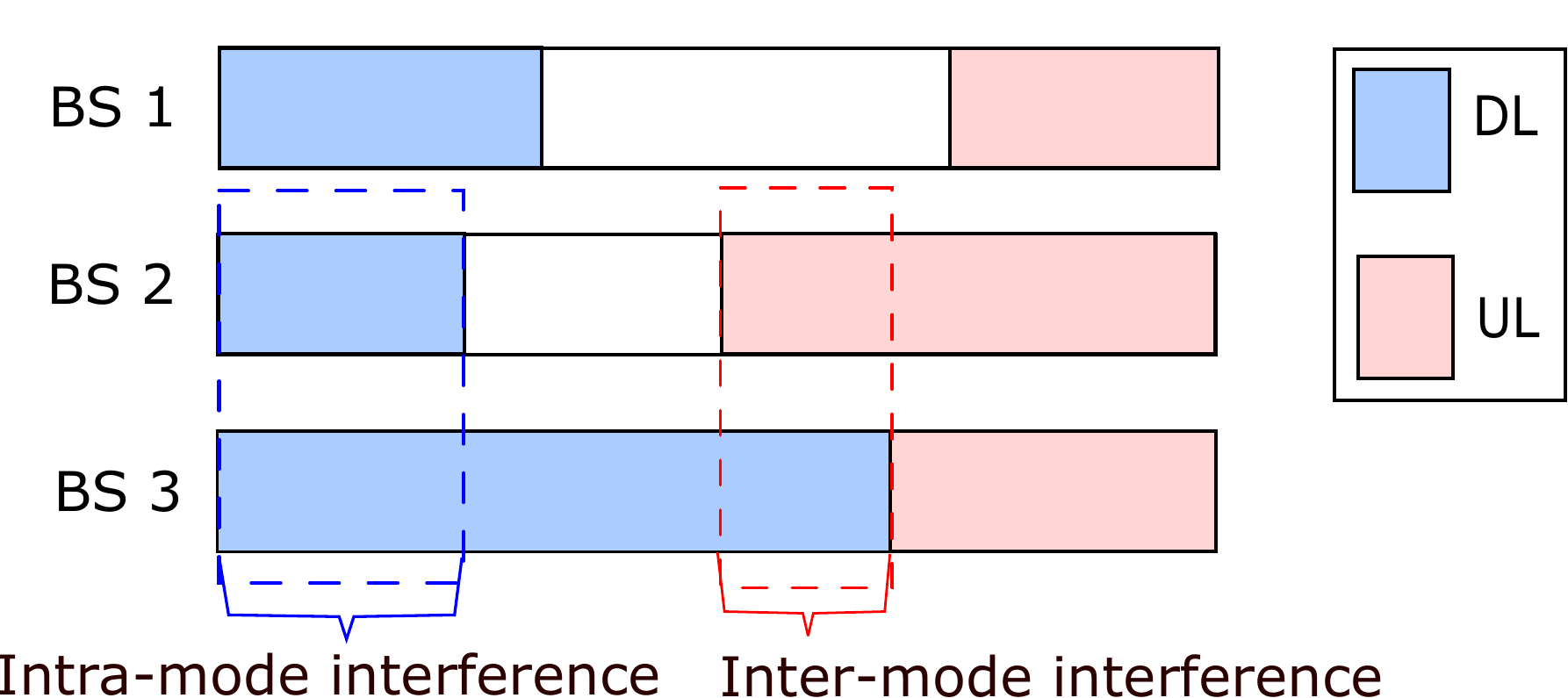}
\caption{Regions where inter-cell interference appears.}
\vspace{-2ex}
\label{fig:ULDLOverlapping}
\end{figure}
%
\subsection{Numerical Results}
We consider the same network introduced in Section \ref{subsec:DL_Numerical}. In addition, flexible duplex is enabled such that resources can be dynamically allocated to \ac{UL} or \ac{DL} services. 
Owing to the space limitation, we only present one exemplary numerical result in Fig. \ref{fig:traffic}.  We compare the performance of the following four protocols: (\Rmnum{1})\lq\lq{FIX}\rq\rq \ for fixed ratio between \ac{UL} and \ac{DL} resources; (\Rmnum{2})\lq\lq{SAFP}\rq\rq \ for dynamic \ac{UL}/\ac{DL} resource configuration with \ac{SAFP} without muting; (\Rmnum{3})\lq\lq{Resource muting based on $\ve{I}$}\rq\rq \ for the resource muting scheme based on interference indicator $\ve{I}$ as introduced in Section \ref{subsec:DL_Numerical}, and (\Rmnum{4})\lq\lq{Resource muting based on $\ve{w}_{\infty}$}\rq\rq \ for the resource muting scheme based on the asymptotic behavior $\vw_{\infty}$. The performance  is obtained by averaging $1000$ Monte Carlo simulations with random distribution of user locations and traffic demands conditioned on the given traffic asymmetry for $K = 200$. We define a measure called {\it inter-cell traffic distance} to reflect both the inter-cell traffic asymmetry and the intra-cell \ac{UL}/\ac{DL} traffic asymmetry between a pair of cells $m,n$, computed as $D_{m,n} := \|\ve{d}_n - \ve{d}_m\|_2$, where $\|\cdot\|_{2}$ denotes the $L^{2}$-norm, and the per-cell \ac{UL}/\ac{DL} traffic demands $\ve{d}_n:=\left[d_n^{\ul}, d_n^{\dl}\right]$ are normalized. Fig. \ref{fig:traffic} shows that dynamic \ac{UL}/\ac{DL} configuration achieves a twofold increase in the average utility compared to fixed \ac{UL}/\ac{DL} ratio. Resource muting brings further improvement, varying from $20$\% to  $100$\%, by adapting to the traffic asymmetry. Note that the performance gains increase with the traffic asymmetry. 
\begin{figure}[t]
    \centering
        \includegraphics[width=1\columnwidth]{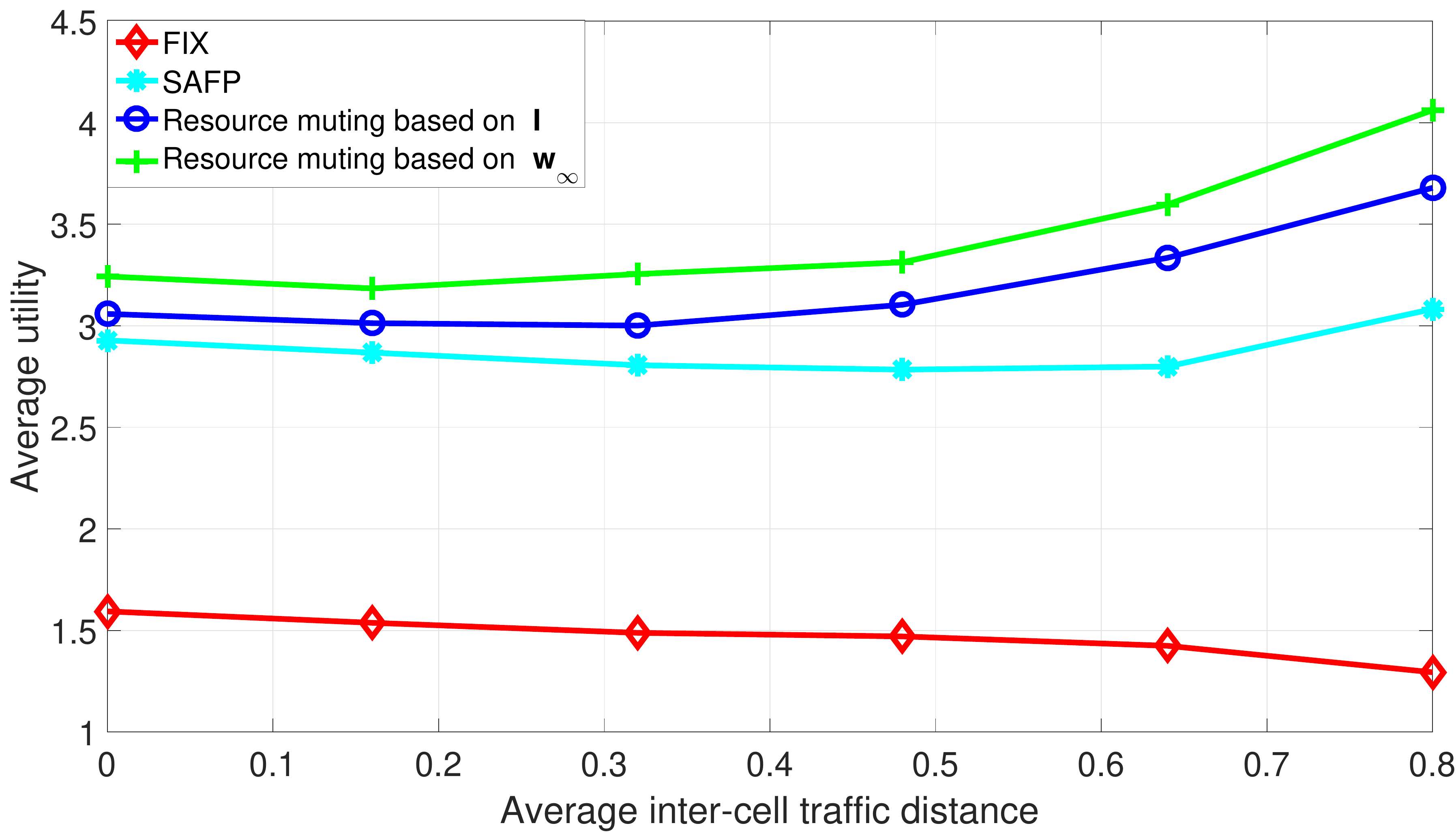}
        \caption{Average utility with different inter-cell traffic distances.}
 \label{fig:traffic}
\vspace{-2ex}
\end{figure}
\section{Conclusion}\label{sec:concl}
We have characterized an efficient resource utilization region by studying the asymptotic behavior of solutions to max-min utility optimization problems with  interference models based on the load imposed by services. Building upon this result, we have developed a partial resource muting scheme that is suitable for both conventional 4G \ac{DL} networks and flexible duplex enabled 5G networks. Simulations show significant performance gains in both scenarios, measured by the worst-case \ac{QoS} satisfaction level, compared to the optimal solution to the max-min optimization problem without resource muting mechanisms. We have also shown that the gain obtained with the proposed muting scheme increases when the traffic  becomes highly asymmetric. 

%
\acrodef{3GPP}{3rd generation partnership project}
\acrodef{5G}{fifth generation}
 \acrodef{ABS}{almost blank subframe}
    \acrodef{BS}{base station}
    \acrodef{CDF}{cumulative distribution function}
		\acrodef{CEVP}{constrained eigenvalue problem}
    \acrodef{CSI}{channel state information}
    \acrodef{CQI}{channel quality indicator}
\acrodef{DL}{downlink}
    \acrodef{DUDe}{downlink and uplink decoupling}
\acrodef{eICIC}{enhanced intercell interference coordination}
\acrodef{ESD}{energy spectral density}
\acrodef{FDD}{frequency division duplex}
    \acrodef{FDMA}{frequency division multiple access}
   \acrodef{GP}{Gaussian process}
    \acrodef{GPS}{global positioning system}
\acrodef{HetNet}{heterogeneous network}
    \acrodef{ICI}{inter-cell interference}
		\acrodef{IMI}{inter-mode interference}
\acrodef{LTE}{long term evolution}

\acrodef{MAC}{media access control}
\acrodef{MIMO}{multiple-input and multiple-output}
\acrodef{MRU}{minimum resource unit}
\acrodef{NLES}{nonlinear equation system}
   \acrodef{OFDM}{orthogonal frequency division multiplexing}
    \acrodef{PDF}{probability density function}
    \acrodef{PHY}{physical layer}
		\acrodef{PSD}{power spectral density}
    \acrodef{PRB}{physical resource block}
   \acrodef{QoE}{quality of experience}
    \acrodef{QoS}{quality of service}
    \acrodef{RAN}{radio access network}
		\acrodef{RB}{resource block}
		\acrodef{RBS}{removal of bottleneck services}
		\acrodef{RMDI}{resource muting for dominant interferer}
    \acrodef{RRM}{radio resource management}
		\acrodef{RU}{resource unit}
		\acrodef{RX}{receiver}
 \acrodef{SAFP}{successive approximation of fixed point}
    \acrodef{SDN}{software defined network}
    \acrodef{SNR}{signal-to-noise ratio}
    \acrodef{SINR}{signal-to-interference-plus-noise ratio}
\acrodef{SIR}{signal-to-interference ratio}
\acrodef{SIF}{standard interference function}
    \acrodef{SVM}{support vector machine}
    \acrodef{TCP}{transmission control protocol}
		\acrodef{TDD}{time division duplex}
    \acrodef{TDMA}{time division multiple access}
		\acrodef{TTI}{transmission time interval}
		\acrodef{TX}{transmitter}
		\acrodef{UE}{user equipment}
		\acrodef{UL}{uplink}
    \acrodef{WLAN}{wireless local area network}
%
\vspace{-1ex}
\bibliographystyle{IEEEtran}
\bibliography{refs}
		
\end{document}